\newtheorem{lem}{Lemma}
\newtheorem{thm}{Theorem}
\newtheorem{defn}{Definition}
\newtheorem{rem}{Remark}
\crefname{equation}{}{}
\Crefname{equation}{}{}
\crefname{thm}{theorem}{theorems}
\Crefname{thm}{Theorem}{Theorems}
\crefname{clm}{claim}{claims}
\Crefname{clm}{Claim}{Claims}
\Crefname{coro}{Corollary}{Corollaries}
\Crefname{lem}{Lemma}{Lemmas}
\Crefname{sec}{Section}{Sections}
\crefname{app}{appendix}{appendices}
\Crefname{app}{Appendix}{Appendices}
\crefname{prop}{proposition}{propositions}
\Crefname{prop}{Proposition}{Propositions}
\Crefname{propty}{Property}{Properties}
\crefname{figure}{fig.}{figures}
\Crefname{figure}{Fig.}{Figures}
\crefname{defn}{definition}{definitions}
\Crefname{defn}{Definition}{Definitions}
\crefname{fact}{fact}{facts}
\Crefname{fact}{Fact}{Facts}
\crefname{appendix}{appendix}{appendices}
\Crefname{appendix}{Appendix}{Appendices}
\crefname{algo}{algorithm}{algorithms}
\Crefname{algo}{Algorithm}{Algorithms}
\crefname{algorithm}{algorithm}{algorithms}
\Crefname{algorithm}{Algorithm}{Algorithms}
\crefname{conj}{conjecture}{conjectures}
\Crefname{conj}{Conjecture}{Conjectures}
\crefname{obs}{observation}{observations}
\Crefname{obs}{Observation}{Observations}
\begin{document}

\newcommand{\totalCaches}{m}
\newcommand{\nFork}{n}
\newcommand{\nPartialFork}{r}
\newcommand{\kJoin}{k}
\newcommand{\xm}{x_m}

\newcommand{\Si}[1]{S_{#1}} 
\newcommand{\Sij}[2]{S_{#1}^{#2}} 
\newcommand{\lami}[1]{\lambda_{#1}} 
\newcommand{\lamij}[2]{\lambda^{(#1)}_{#2}} 
\newcommand{\rmv}[1]{}

\newcommand{\Rij}[2]{R^{(#1)}_{#2}}

\newcommand{\tMDS}{t_{\textrm{MDS}}} 

\newtheorem{claim}{Claim}
\newtheorem{conjecture}{Conjecture}
\newtheorem{lemma}{Lemma}
\newtheorem{theorem}{Theorem}
\newtheorem{corollary}{Corollary}


\title{On the Service Capacity Region of Accessing Erasure Coded Content}


\author{
    \IEEEauthorblockN{Mehmet~Akta\c{s}\IEEEauthorrefmark{1}, Sarah~E.~Anderson\IEEEauthorrefmark{2}, Ann~Johnston\IEEEauthorrefmark{3}, Gauri~Joshi\IEEEauthorrefmark{4}, \\ Swanand~Kadhe\IEEEauthorrefmark{5}, Gretchen L.~Matthews\IEEEauthorrefmark{6}, Carolyn~Mayer\IEEEauthorrefmark{7}, and Emina~Soljanin\IEEEauthorrefmark{1}} \\
    \IEEEauthorblockA{\IEEEauthorrefmark{1}Rutgers University \{mehmet.aktas, emina.soljanin\}@rutgers.edu, \IEEEauthorrefmark{2}University of St. Thomas \{ande1298\}@stthomas.edu}
    \IEEEauthorblockA{\IEEEauthorrefmark{3}Penn State University \{abj5162\}@psu.edu, \IEEEauthorrefmark{4}Carnegie Mellon University \{gaurij\}@andrew.cmu.edu}
    \IEEEauthorblockA{\IEEEauthorrefmark{5}Texas A\&M University \{swanand.kadhe\}@tamu.edu, \IEEEauthorrefmark{6}Clemson University \{gmatthe\}@clemson.edu}
    \IEEEauthorblockA{\IEEEauthorrefmark{7}University of Nebraska - Lincoln \{cmayer\}@huskers.unl.edu}
}

\maketitle


\begin{abstract}
Cloud storage systems generally add redundancy in storing content files such that $K$ files are replicated or erasure coded and stored on $N > K$ nodes. In addition to providing reliability against failures, the redundant copies can be used to serve a larger volume of content access requests. A request for one of the files can either be sent to a systematic node, or one of the repair groups. In this paper, we seek to maximize the service capacity region, that is, the set of request arrival rates for the $K$ files that can be supported by a coded storage system. We explore two aspects of this problem: 1) for a given erasure code, how to optimally split incoming requests between systematic nodes and repair groups, and 2) choosing an underlying erasure code that maximizes the achievable service capacity region. In particular, we consider MDS and Simplex codes. Our analysis demonstrates that erasure coding makes the system more robust to skews in file popularity than simply replicating a file at multiple servers, and that coding and replication together can make the capacity region larger than either alone.
\end{abstract}

\begin{IEEEkeywords}
distributed storage, erasure coding, service capacity
\end{IEEEkeywords}

\section{Introduction}
\label{sec:intro}







%
%


Cloud storage systems are expected to provide reliability against failures and ensure availability of stored content during high demand, while handling massive amount of data. 
In order to combat failures, redundancy is added using either replication or erasure coding. Even though replication has conventionally been preferred due its simplicity, a large body of recent literature has proposed novel erasure coding techniques as a more efficient way to provide reliability, see e.g.,~\cite{Dimakis:10,Dimakis-Survey:11,Huang:07,gopalan2012locality}.
%
In addition to reliability, redundancy has been shown to be effective in enhancing availability by reducing download latency for retrieving entire data in a number of recent research papers, see e.g.,~\cite{joshi2012coding, shah2014mds, liang2014fast, gardner2015reducing}. On the other hand, for downloading hot data, wherein users are interested in downloading individual files with different popularities, the role of erasure codes in reducing latency is not yet well understood, and is a topic of active research, 
see~\cite{swanand_isit_2015,swanand_allerton_2015,Simplex:AktasNS17}. 

Besides download latency, an important metric that measures the availability of the stored data is the service capacity region, which is the space of download request rates for which the system is stable. In comparison to download latency, this metric of service capacity has received very little attention. One notable exception is the work of~\cite{Noori:16}, in which the authors study storage allocation strategies to maximize the service rate for downloading entire data.





In this paper, we seek to investigate the effect of redundancy on the service capacity region of the the system. To the best of our knowledge, this is the first work to investigate the service capacity for downloading hot data from coded storage.
More specifically, we consider a system with $K$ files, $f_1, f_2, \dots , f_K$ that are replicated or stored in coded form on $N > K$ servers. Requests to download file $f_i$ arrive at rate $\lambda_i$. Our first objective is to maximize the set of arrival rates supported by a given coding scheme. Next, we compare service capacity regions for different coding schemes.

Together with replication and maximum distance separable (MDS) codes, we consider an important family of distributed storage codes called {\it availability codes} (see~\cite{Wang:14,Rawat:14Availability,Tamo:14Availability}). Availability codes enable any codeword symbol to be recovered from multiple, disjoint subsets of other symbols of small size. Amongst availability codes, we focus our attention on the special sub-class, namely {\it simplex codes} due to their optimality in rate 
\cite{CadambeM:15}.

We note that MDS codes are more robust to handling variations in the access patterns compared to replication. Availability codes handle the skews in popularities better than MDS codes. Surprisingly, {\it hybrid codes}, formed by replicating some of the files and adding MDS parity symbols perform exceptionally well by achieving a large service capacity region. 


It is important to note that, even though we focus on the service capacity for content download, the techniques are also applicable for analyzing service capacity for coded computation. For example, suppose some users are interested in computing a matrix vector product $AX$, while others are interested in computing $BX$. Suppose two worker nodes store matrices $A$ and $B$ respectively, while the third worker node stored the sum $A+B$ of the two matrices, (assuming that $A$ and $B$ are of the same size). Then, excess number of requests to compute $AX$ can be satisfied by using the other two workers.

{\it Organization:} 
In Section~\ref{sec:prob_formu}, we describe the problem setup and formally define the service capacity region. In Section~\ref{sec:examples}, we motivate the analysis of service capacity by computing the service capacity for several small examples of codes. In Section~\ref{sec:n_k_mds}, we focus on systematic $(N,K)$ MDS codes. We find an outer bound on the service capacity region, and present a greedy algorithm referred to as waterfilling algorithm. We show that the waterfilling algorithm is optimal, and it achieves the outer bound for MDS codes of rate smaller than or equal to half. In Section~\ref{sec:simplex}, we characterize the service capacity of simplex codes. Our proof for converse uses an interesting connection to graph covering. In Section~\ref{sec:adding_sys_nodes} we consider {\it hybrid codes} consisting of replication and MDS parities. For $K=2$ files, we characterize the service capacity of hybrid codes as a function of the number of replicas of the two files and the number of MDS parities.

\section{Problem Formulation}
\label{sec:prob_formu}
We have $K$ files, $f_1, f_2, \dots f_K$ of equal size stored redundantly across $N$ nodes, labeled $1$ through $N$. We refer to the coding scheme encoding $K$ files into $N$ as an $(N,K)$ code. Requests to download $f_i$ arrive at rate $\lambda_i$. Our objective is to determine the set of arrival rates $(\lambda_1, \dots \lambda_K)$ that can be served by the system. We refer to space of arrival rates that can be served as the {\it capacity region} of the system. 

As the coding scheme adds redundancy, each file can be recovered in multiple ways. For a file $f_i$, a subset of nodes (of minimal size) from which the file can be recovered is referred to as a {\it recovering set} of $f_i$. We denote the number of distinct recovering sets of $f_i$ as $t_i$, and label them as $\Rij{i}{1}, \cdots, \Rij{i}{t_i}$. For example, consider the following $(4,2)$ code over $\mathbb{F}_{3}$: $\{f_1, f_2, f_1+f_2, f_1+2f_2\}$. There are four recovering sets for each file. Recovering sets of $f_1$ are given as $\Rij{1}{1} = \{1\}$, $\Rij{1}{2} = \{2,3\}$, $\Rij{1}{3} = \{2,4\}$, and $\Rij{1}{4} = \{3,4\}$. Observe that for a systematic $(N,K)$ MDS code, there are $\binom{N-1}{K}+1$ recovering sets for every file. 

We consider the class of scheduling strategies that assign a fraction of requests for a file to each of its recovering sets. Let $\lamij{i}{j}$ be the fraction of requests for file $f_i$ that are assigned to its recovering set $\Rij{i}{j}$. Note that $\sum_{j=1}^{t_i}\lamij{i}{j} = \lami{i}$. Then, the service capacity region of an $(N,K)$ coding scheme is defined as follows.

\begin{defn}[Service Capacity Region]
\label{def:service-capacity-region}
Consider a system storing $K$ files over $N$ nodes using an $(N,K)$ code such that a file $f_i$ has $t_i$ recovering sets $\Rij{i}{1},\cdots,\Rij{i}{t_i}$. Let the service rate of every node is $\mu$. Then, the service capacity region of such a system is the set of vectors $(\lami{1}, \ldots, \lami{K})$ such that, for every $1\leq i\leq K$, there exist $\lamij{i}{j}$, $1\leq j\leq t_i$, satisfying the following:
  \begin{IEEEeqnarray}{rCl}
  \label{eq:LP-sum-constraint}
  \sum_{j=0}^{\tMDS} \lamij{i}{j} & = & \lami{i}, \quad 1\leq i \leq K\\
  \label{eq:LP-UB-constraint}
  \sum_{i=1}^{K} \sum_{j:\ell \in \Rij{i}{j}} \lamij{i}{j} & \leq & \mu, \quad 1\leq \ell\leq N\\
  \label{eq:LP-non-negativity-constraint}
  \lamij{i}{j} & \geq & 0, \quad  1\leq i\leq K,\:  1\leq j\leq t_i.
  \end{IEEEeqnarray}
\end{defn}

Note that, given any $K-1$ arrival rates $\lami{i_1},\cdots,\lami{i_{K-1}}$, finding the maximum value of $\lami{i_K}$ and the allocations $\lamij{i_l}{j}$ such that~\eqref{eq:LP-sum-constraint},~\eqref{eq:LP-UB-constraint}, and~\eqref{eq:LP-non-negativity-constraint} hold can be considered as a constrained optimization problem. Specifically, given $\lami{1},\cdots,\lami{K-1}$, the linear program to compute the maximum $\lami{K}$ is described as follows.   

\begin{IEEEeqnarray}{rC}
\max & \lami{K} = \sum_{j=0}^{\tMDS}\lamij{K}{j}\nonumber\\
\textrm{s.t.} & ~\eqref{eq:LP-sum-constraint},~\eqref{eq:LP-UB-constraint},~\eqref{eq:LP-non-negativity-constraint}.\nonumber
\end{IEEEeqnarray}

\section{Examples of Service Capacity Regions}
\label{sec:examples}

To motivate the analysis, suppose $K=2$, and we have two files $a$ and $b$ which are stored on $N=4$ nodes. We compare three storage schemes: uncoded, MDS coded, and a hybrid between repetition and coding shown in \Cref{fig:4_2_cache_eg}. 

\begin{figure}[t]
\begin{subfigure}[t]{0.85\linewidth}
     \centering
    \includegraphics[width= 2.0 in]{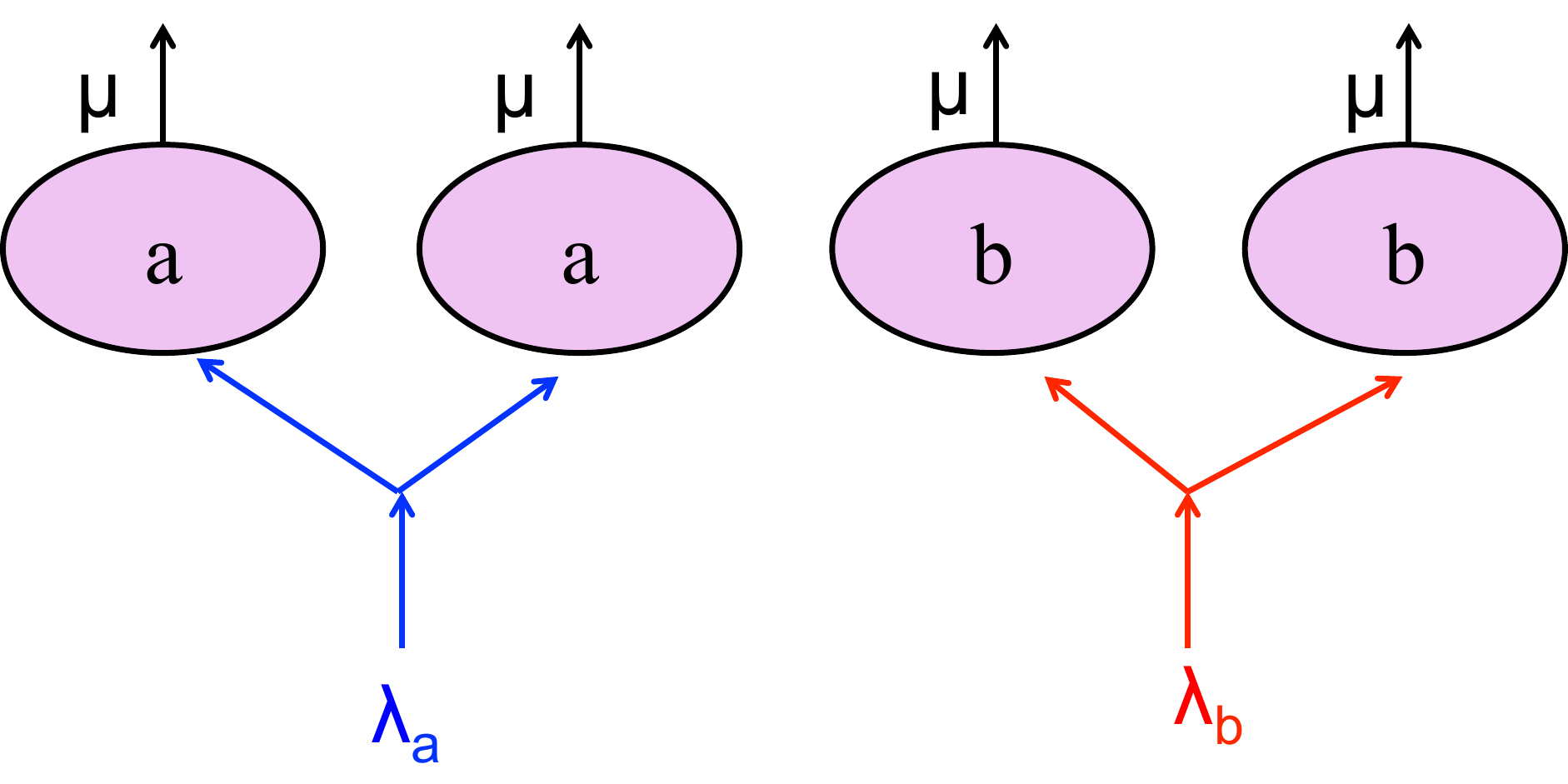}
 \caption{$(4,2)$ repetition coded system\label{fig:4_2_rep_eg}}
 \end{subfigure}
 \begin{subfigure}[t]{0.85\linewidth}
    \centering
   \includegraphics[width= 2.0 in]{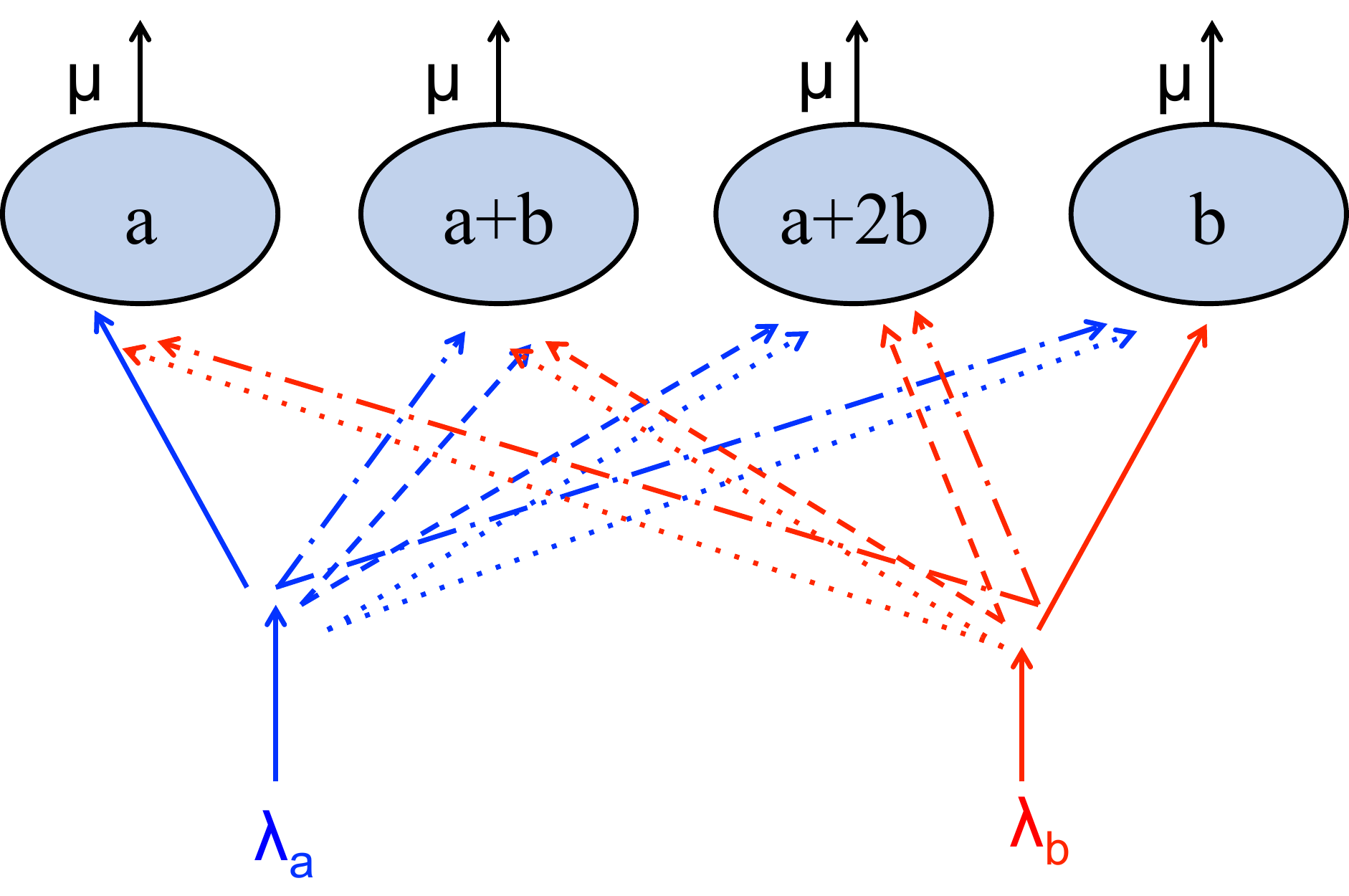}
\caption{$(4,2)$ erasure coded system.\label{fig:4_2_coded_eg}}
\end{subfigure}
  \begin{subfigure}[t]{0.85\linewidth}
     \centering
    \includegraphics[width= 2.0 in]{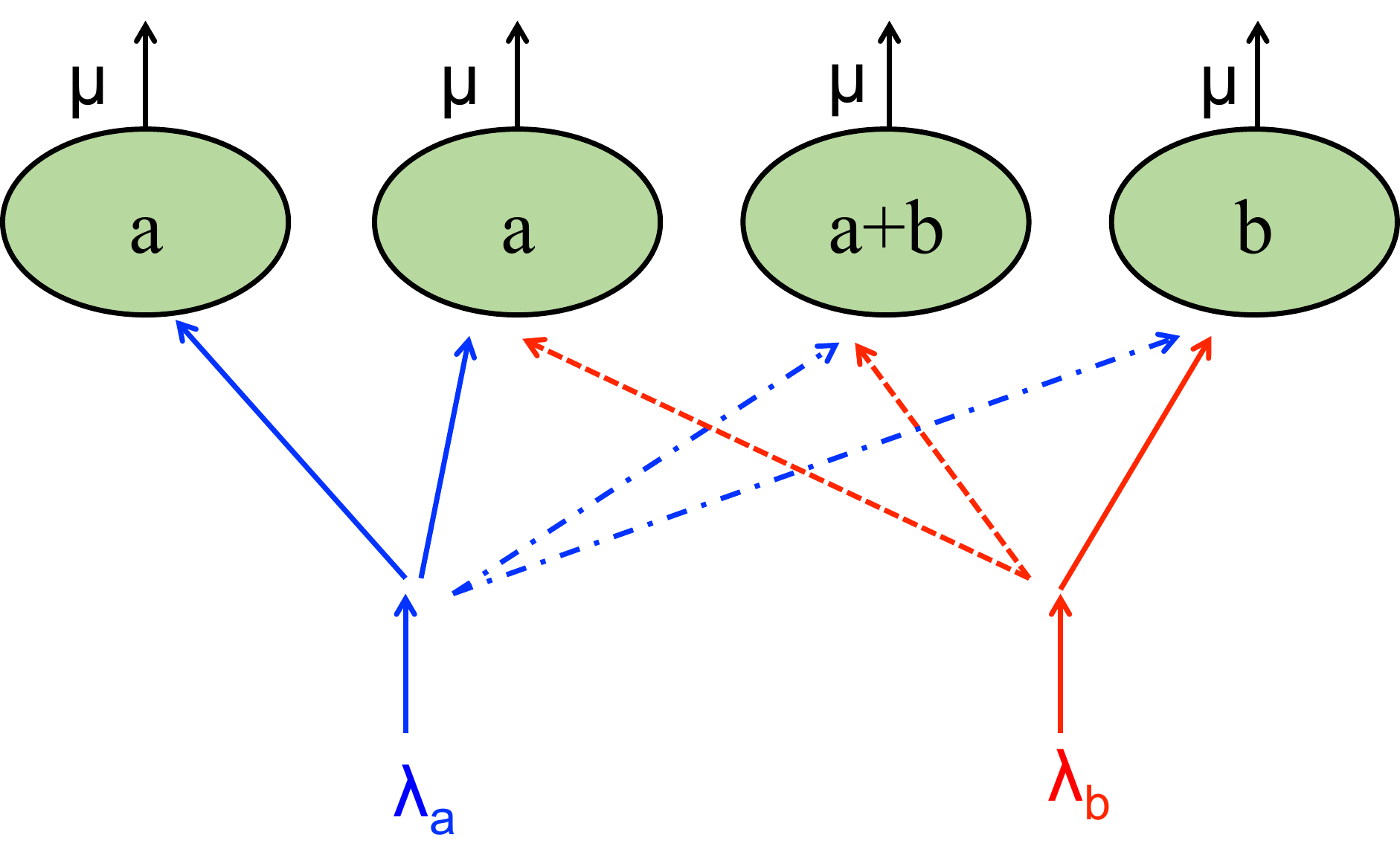}
 \caption{$(4,2)$ hybrid system, favoring $a$\label{fig:4_2_hybrid_eg}}
 \end{subfigure}
\caption{Different ways of hosting $k=2$ files on $n=4$ nodes. \label{fig:4_2_cache_eg}}
\vspace{-0.5cm}
\end{figure}

\subsection{Repetition Coding}
\label{subsec:repetition_eg}
Consider the uncoded system where file $a$ and $b$ are replicated at $2$ servers each, illustrated in \Cref{fig:4_2_rep_eg}. Since each server can support rate $\mu$ of arrivals, we have constraints $\lambda_a \leq 2\mu$ and $\lambda_b \leq 2 \mu$. Thus the achievable rate region is the square $0 \leq \lambda_a, \lambda_b \leq 2\mu$ illustrated in pink in \Cref{fig:4_2_rate_region}.

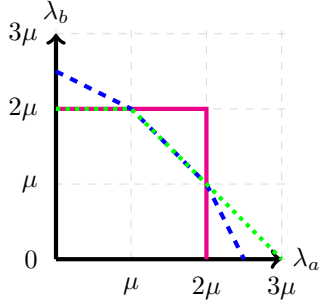
\begin{figure}[t]
\begin{center}
\begin{tikzpicture}[ultra thick,scale=1]
\draw[help lines, color=gray!30, dashed] (-.01,-.01) grid (3.1,3.1);
\draw[->,ultra thick,black] (0,0)--(3,0) node[right]{$\lambda_a$};
\draw[->,ultra thick,black] (0,0)--(0,3) node[above]{$\lambda_b$};
\draw[ultra thick,magenta] (0,2)--(2,2)--(2,0);
\draw[ultra thick,blue,dashed] (0,2.5)--(1,2)--(2,1)--(2.5,0);
\draw[ultra thick,green,dotted] (0,2)--(1,2)--(3,0);
\node[circle,scale=.5,label=left:$0$] (O) at (0,0) {};
\node[circle,scale=.5,label=below:$\mu$] (x1) at (1,0) {};
\node[circle,scale=.5,label=below:$2\mu$] (x2) at (2,0) {};
\node[circle,scale=.5,label=below:$3\mu$] (x3) at (3,0) {};
\node[circle,scale=.5,label=left:$\mu$] (y1) at (0,1) {};
\node[circle,scale=.5,label=left:$2\mu$] (y2) at (0,2) {};
\node[circle,scale=.5,label=left:$3\mu$] (y3) at (0,3) {};
\label{fig:4_2_rate_region}
\end{tikzpicture}
\end{center}
\caption{Service capacity regions of the $(4,2)$ repetition (in pink), MDS coded (in blue) and hybrid (in green) systems.}
\end{figure}

\subsection{MDS Coding} 
\label{subsec:mds_eg}
Next let us find the rate region of the $(4,2)$ coded system illustrated in \Cref{fig:4_2_cache_eg}. Recall for each given $\lambda_a$, we want to determine the maximum achievable $\lambda_b$. We divide the problem into three cases: \\
\textit{Case 1 ($0 \leq \lambda_a < \mu$):} All the requests for file $a$ should be assigned to the systematic node $a$. Requests for file $b$ can utilize the remaining capacity $\mu -\lambda_a$ of this node. We can use this by assigning $(\mu -\lambda_a)/2$ requests for file $b$ to nodes $a$ and $a+b$, and $(\mu - \lambda_a)/2$ to nodes $a$ and $a+2b$. Now nodes $a+b$ and $a+2b$ have $\mu/2 + \lambda_a/2$ capacity each remaining, which can be used to serve $\mu/2 + \lambda_a/2$ requests per second. Thus, maximum achievable $\lambda_b$ is
\begin{align}
 \lambda_b &= \mu + (\mu - \lambda_a) + (\mu + \lambda_a)/2 \\
 &= 2.5 \mu - \lambda_a/2
 \end{align}
\textit{Case 2 ($\mu \leq \lambda_a < 2 \mu $):} \\
Out of $\lambda_a$, $\mu$ volume of requests are assigned to the systematic node $a$. The remaining $\lambda_a - \mu$ traffic is assigned to nodes $a+b$ and $a + 2b$ from which we can recover file $a$. Thus, the coded nodes $a+b$ and $a+2b$ have $\mu - (\lambda_a -\mu)$ capacity remaining to serve requests for file $b$. Hence, the maximum achievable $\lambda_b$ is
\begin{align}
\lambda_b &= \mu + \mu - (\lambda_a - \mu),\\
&= 3 \mu - \lambda_a.
\end{align}

\textit{Case 3 ($\lambda_a \geq 2 \mu$):}
The solution to this case is same as Case 1, with $\lambda_a$ replaced by $\lambda_b$. Thus the maximum achievable $\lambda_b$ is
\begin{align}
\lambda_b &= 5 \mu - 2 \lambda_a
\end{align}

Combining these cases, we get the achievable rate region illustrated in blue in Fig. 2. 


\subsection{Hybrid Coding} 
\label{subsec:hybrid_eg}
If file $a$ is known to be more popular than $b$, we can have a coded system with $N=4$ nodes storing $a$, $a$, $a+b$ and $b$ respectively. This coding scheme is a combination of repetition and erasure coding. We can find the service capacity by dividing the problem into cases, similar to \Cref{subsec:mds_eg}. For this system the service capacity region is given by Fig. 2. 

\section{$(N,K)$ systematic MDS coded systems}
\label{sec:n_k_mds}
In this section we find the service capacity region of a system of $N$ servers that store $K$ files $f_1, \ldots, f_K$ together with $N-K$ parity files that are generated using an $(N,K)$ MDS code. Each of the original and redundant files are distributed across all $N$ servers. Each file $f_i$ can be downloaded from the server storing it, which we refer to as the systematic server for the file, or by accessing any $K$ of the remaining $N-1$ servers.

Let the arrival rate of requests for file $f_i$ be denoted by $\lambda_i$. We want to determine the set of arrival rate vectors $(\lambda_1, \dots \lambda_K)$ that can supported by the system. 

\subsection{Outer Bound on the Rate Region}
First we find an outer bound of the service capacity region.
\begin{thm}[Outer Bound]
  The set of all achievable request vectors $(\lambda_1, \lambda_2, \dots, \lambda_K)$ lies inside the region described by
  \begin{align}
    \sum_{i=1}^{K} \left( \min(\lambda_i, \mu) +  K(\lambda_i - \mu)^{+} \right) \leq N\mu, \label{eqn:mds_outer_bnd}
  \end{align}
  where the notation $(x)^{+} = \max(0, x)$.
\label{thm:mds_outer_bnd}
\end{thm}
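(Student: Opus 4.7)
The plan is to prove the bound by a double-counting / aggregate-load argument. Every unit of demand must be served, and when it is routed to a recovering set of size $s$, it consumes $s$ units of aggregate node capacity. For an $(N,K)$ MDS code, the recovering sets of file $f_i$ are (i) the singleton systematic node $\{i\}$ of size $1$, and (ii) the parity recovering sets, each of size exactly $K$ (any $K$ of the remaining $N-1$ nodes). Since the total aggregate service capacity is $N\mu$, it suffices to lower-bound the aggregate load by the left-hand side of~\eqref{eqn:mds_outer_bnd}.

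Concretely, let $\lamij{i}{0}$ be the rate routed to the systematic node for file $f_i$ and $\lamij{i}{j}$ for $j\geq 1$ be the rates routed to the parity recovering sets. Summing constraint~\eqref{eq:LP-UB-constraint} over all $N$ nodes, the left-hand side becomes $\sum_{i,j}|\Rij{i}{j}|\,\lamij{i}{j}$, so using $|\Rij{i}{0}|=1$ and $|\Rij{i}{j}|=K$ for $j\geq 1$, and recalling $\sum_{j\geq 0}\lamij{i}{j}=\lami{i}$ from~\eqref{eq:LP-sum-constraint}, I get
\begin{equation*}
\sum_{i=1}^{K}\Bigl(\lamij{i}{0}+K(\lami{i}-\lamij{i}{0})\Bigr)
=\sum_{i=1}^{K}\Bigl(K\lami{i}-(K-1)\lamij{i}{0}\Bigr)\leq N\mu.
\end{equation*}

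Next I would bound $\lamij{i}{0}$ from above by $\min(\lami{i},\mu)$: it is clearly at most $\lami{i}$ by~\eqref{eq:LP-non-negativity-constraint}, and at most $\mu$ because the systematic node for $f_i$ serves only $f_i$'s systematic requests (constraint~\eqref{eq:LP-UB-constraint} at node $i$). Since $K-1\geq 0$, substituting this upper bound weakens the inequality in the right direction, yielding
\begin{equation*}
\sum_{i=1}^{K}\Bigl(K\lami{i}-(K-1)\min(\lami{i},\mu)\Bigr)\leq N\mu.
\end{equation*}

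Finally, I would do the two-case simplification of the summand: if $\lami{i}\leq\mu$, the summand equals $\lami{i}=\min(\lami{i},\mu)$ and $(\lami{i}-\mu)^{+}=0$; if $\lami{i}>\mu$, the summand equals $K\lami{i}-(K-1)\mu=\mu+K(\lami{i}-\mu)=\min(\lami{i},\mu)+K(\lami{i}-\mu)^{+}$. In both cases the summand is exactly $\min(\lami{i},\mu)+K(\lami{i}-\mu)^{+}$, which is~\eqref{eqn:mds_outer_bnd}. No real obstacle arises here; the only subtlety is being careful that the bound $\lamij{i}{0}\leq\min(\lami{i},\mu)$ is applied with the right sign (which it is, because the coefficient $-(K-1)$ is nonpositive), so that substitution produces a valid upper bound on the aggregate-load expression.
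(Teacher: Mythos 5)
Your proof is correct and follows essentially the same aggregate-capacity counting idea as the paper: total work delivered by all nodes is at most $N\mu$, and each file's requests, however routed, consume at least $\min(\lambda_i,\mu)+K(\lambda_i-\mu)^{+}$ of that budget. The paper simply asserts this per-file minimum load without derivation, whereas you derive it explicitly from the LP by summing constraint~\eqref{eq:LP-UB-constraint} over all $N$ nodes to get $\sum_{i,j}|\Rij{i}{j}|\lamij{i}{j}\leq N\mu$, substituting the recovering-set sizes $1$ and $K$, and then observing that the expression $K\lami{i}-(K-1)\lamij{i}{0}$ is minimized by taking $\lamij{i}{0}$ as large as possible, namely $\min(\lami{i},\mu)$. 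This is a more rigorous rendering of the same argument, not a different route.
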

\begin{proof}
  Each server in the system can support $\mu$ volume of requests, and thus the total capacity is $N\mu$. We now determine the total system capacity utilized by file download requests, and ensure that it is less than $N\mu$. Downloading a file from $K$ coded servers requires downloading data of size $K$ times the file size. If $\lambda_i$ is the rate of request arrivals for file $f_i$, the minimum system capacity utilized by these requests is  $\min(\lambda_i, \mu) +  K(\lambda_i - \mu)^{+}$. Since the total system capacity is $N\mu$, the sum of the capacity utilized by all requests must be less than $N\mu$. Thus we have \eqref{eqn:mds_outer_bnd}.
\end{proof}

\begin{rem}
For the $(4,2)$ system, the region described by \Cref{thm:mds_outer_bnd} matches exactly with the achievable region found in \Cref{fig:4_2_rate_region}.
\end{rem}


\begin{figure}[t]
    \centering
   \includegraphics[width= 3.1in]{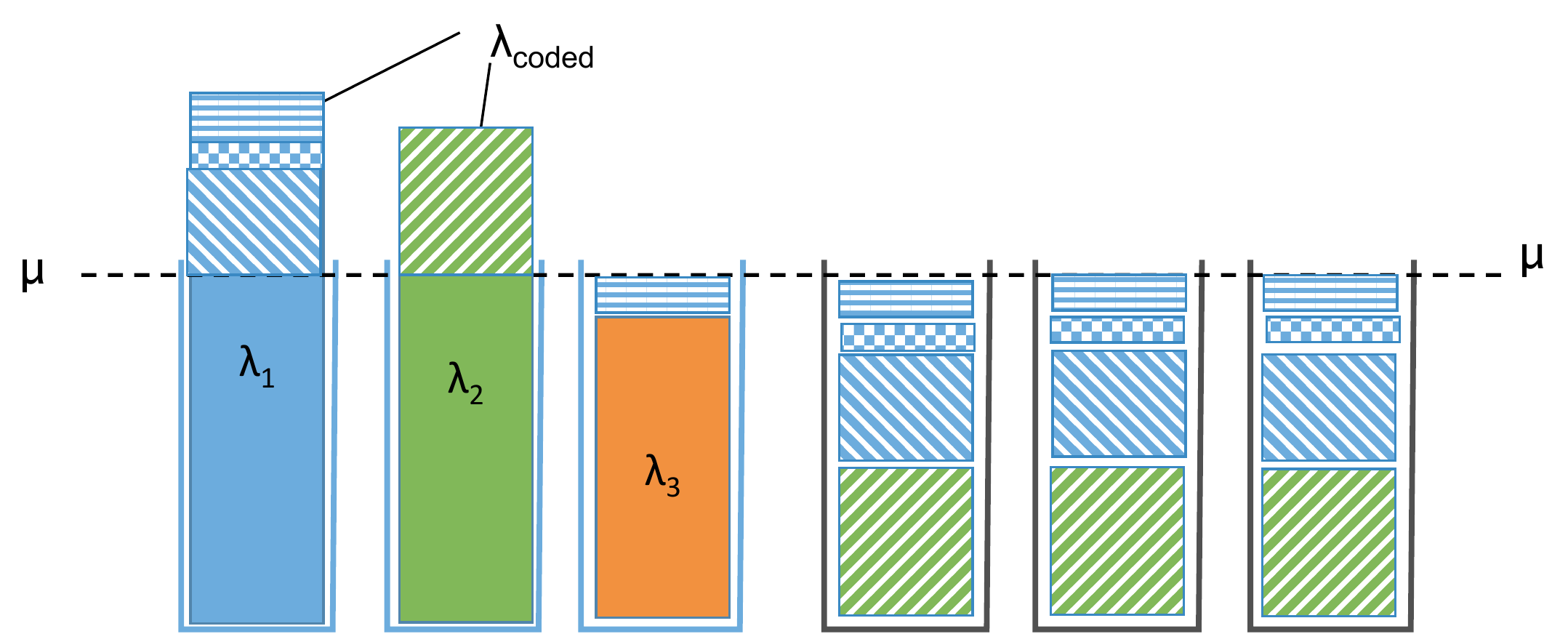}
\caption{Water-filling strategy to server the requests using coded nodes\label{fig:mds_waterfill_2}}
\vspace{-0.3cm}
\end{figure}

\subsection{$N-K \geq K$: Achievable Region Matches Outer Bound}
We seek to find a strategy to split the download requests across the $N$ servers such that the set of feasible arrival rates matches, or comes close to the outer bound given by \Cref{thm:mds_outer_bnd}. We now propose a water-filling algorithm to schedule requests to servers on a $(N, K)$ coded system. Then we prove the optimality of this algorithm by considering two cases: 1) $N-K \geq K$ (the code rate $\leq 1/2$), and 2) $N-K < K$ (the code rate  $> 1/2$).

\begin{defn}[Waterfilling Algorithm]
Given arrival rates $\lambda_1$, $\lambda_2$, \dots $\lambda_K$ for the $K$ files, the water-filling algorithm assigns them to the $N$ nodes as follows.
\begin{itemize}
\item Let $\gamma_{i}$ be the load on node $i$ for $i = 1, 2, \dots N$. Assign requests to their respective systematic nodes until these nodes are saturated. Set $\gamma_{i} = \min(\lambda_i, \mu)$ for $i =1, \dots K$.
\item Each of the remaining $\lambda_{coded} = \sum_{i=1}^K(\lambda_i - \mu)^+$ requests can be served by any $K$ unsaturated servers.
\item While $\lambda_{coded}  > 0$ and $\min_i \gamma_i < \mu$ do the following:
\begin{itemize}
\item Find the $K$ least-loaded servers in the system, that is, the $K$ servers with minimum $\gamma_i$'s. 
\item From $\lambda_{coded}$, send an infinitesimally small rate $\epsilon>0$ to each of these $K$ servers. Decrement $\lambda_{coded}$ by $\epsilon$, and increment the corresponding $K$ $\gamma_i$'s by $\epsilon$.
\end{itemize}
\end{itemize}

\end{defn}

\begin{thm}
\label{thm:waterfilling_opt}
For $N-K \geq K$, the proposed water-filling algorithm is optimal. The set of feasible arrival rates span the whole region inside the outer bound given in \Cref{thm:mds_outer_bnd}.
\end{thm}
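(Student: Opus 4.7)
My approach is to show that whenever the outer bound of Theorem 1 holds and $N-K\ge K$, the waterfilling algorithm produces a feasible assignment of requests to recovering sets; combined with Theorem 1, this proves that the service capacity region coincides with the outer bound. Set $a_i=\min(\lambda_i,\mu)$, $\beta_i=(\lambda_i-\mu)^+$, $\lambda_{coded}=\sum_i\beta_i$, and $L=\{i\in\{1,\dots,K\}:\lambda_i>\mu\}$. After the systematic phase of waterfilling, each node $i\in L$ is fully loaded with residual capacity $c_i=0$; each parity node has residual capacity $\mu$; and each systematic node $i\in\{1,\dots,K\}\setminus L$ has residual capacity $c_i=\mu-\lambda_i$.

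The first structural observation is that any $K$-subset $T\subseteq L^c:=\{1,\dots,N\}\setminus L$ is a valid recovering set for every file $i\in L$, since $T$ automatically excludes $i$ and any $K$ nodes decode any file under the MDS property. Because $|L|\le K$ and $N-K\ge K$, we have $|L^c|\ge N-K\ge K$, so such subsets exist. Hence it suffices to fractionally distribute $\lambda_{coded}$ units of weight among $K$-subsets of $L^c$ subject to the load on each node $i$ not exceeding $c_i$.

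The core technical step invokes the following fractional $K$-matching lemma: for capacities $c_1,\dots,c_M\ge 0$ with $M\ge K$, nonnegative weights $(x_T)$ indexed by $K$-subsets of $\{1,\dots,M\}$ with $\sum_T x_T=W$ and $\sum_{T\ni i}x_T\le c_i$ exist if and only if $\sum_{i=1}^M\min(c_i,W)\ge KW$. Necessity is immediate from $\sum_i\sum_{T\ni i}x_T=KW$ and the per-node bound $\sum_{T\ni i}x_T\le\min(c_i,W)$; sufficiency follows either from a greedy construction that places mass on the $K$ nodes of largest remaining capacity or from LP duality. Applied to $(c_i)_{i\in L^c}$ with $W=\lambda_{coded}$, the hypothesis splits into two cases. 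If $\lambda_{coded}\le\mu$, the $N-K$ parity nodes alone give $\sum_{i\in L^c}\min(c_i,\lambda_{coded})\ge(N-K)\lambda_{coded}\ge K\lambda_{coded}$, using $N-K\ge K$. If $\lambda_{coded}>\mu$, then every $c_i\le\mu<\lambda_{coded}$, so $\sum_{i\in L^c}\min(c_i,\lambda_{coded})=\sum_{i\in L^c}c_i=(N-K)\mu+\sum_{i\in\{1,\dots,K\}\setminus L}(\mu-\lambda_i)\ge K\lambda_{coded}$ by the outer bound.

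Finally, to match the LP formulation of Definition 1, I would split each $x_T$ across files in proportion $\beta_i/\lambda_{coded}$, setting $\lambda^{(i)}_T:=x_T\,\beta_i/\lambda_{coded}$; these sum to $\beta_i$ over $T$ for each file and keep total node loads within $\mu$ by construction. The main obstacle I anticipate is giving a clean proof of the fractional $K$-matching lemma within the conference page budget; once that is granted, the two-case verification exposes the exact role of the hypothesis $N-K\ge K$, namely that the parity nodes alone are enough to absorb coded load in the $\lambda_{coded}\le\mu$ regime, while the outer bound takes over in the $\lambda_{coded}>\mu$ regime.
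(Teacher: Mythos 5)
Your proof is correct in its skeleton and arrives at the same conclusion as the paper, but it packages the argument differently: you abstract out a fractional $K$-matching feasibility lemma (the Hall-type condition $\sum_i \min(c_i, W) \geq KW$), verify its hypothesis in two cases, and then lift the solution back to the LP of Definition~1. The paper instead argues directly about the waterfilling procedure, sorting $\lambda_1 \geq \cdots \geq \lambda_K$ and computing, via a telescoping sum over the successive fill levels $\gamma_K \leq \gamma_{K-1} \leq \cdots \leq \gamma_1 = \mu$, that the coded servers can absorb at most $\mu N/K - \tfrac{1}{K}\sum_i \min(\lambda_i,\mu)$; comparing that to $\lambda_{coded}$ gives exactly the outer bound inequality. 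So the two arguments are really the same waterfilling idea, but yours separates the combinatorial feasibility (matching lemma) from the arithmetic (outer bound verification), and is more explicit about a point the paper elides: you justify that the $K$-subsets used are valid recovering sets by observing that $T \subseteq L^c$ never contains the systematic node of a file in $L$, and that all coded demand comes from files in $L$. Your modular version is cleaner and also exposes precisely where $N-K \geq K$ is used (it guarantees $|L^c| \geq K$, and it makes the $\lambda_{coded} \leq \mu$ regime free).

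The only real gap is the one you flag yourself: the fractional $K$-matching lemma is stated but not proven. It is true, and the waterfilling greedy (place $\epsilon$ on the $K$ nodes of largest residual capacity, maintaining the invariant that $\sum_i\min(r_i, W_{rem}) \geq K W_{rem}$ is preserved) is one clean way to prove sufficiency; LP duality against a fractional cover of the complete $K$-uniform hypergraph is another. In effect this is exactly what the paper's telescoping computation establishes for the specific capacity profile arising here, so you have not avoided the waterfilling argument — you have only deferred it into the lemma. For a self-contained proof you would need to supply that argument (a short induction on the number of distinct capacity values, or the invariant-preservation sketch above, would suffice). Everything else — the reduction to $K$-subsets of $L^c$, the two-case verification of the hypothesis, and the final proportional split $\lambda^{(i)}_T = x_T\,\beta_i/\lambda_{coded}$ to match Definition~1 — is correct.
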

\begin{proof}
For $N-K \geq K$ let us evaluate the set of feasible arrival rates using this waterfilling algorithm.

Without loss of generality, sort the arrival rates in descending order such that $\lambda_1 \geq \lambda_2 \geq \dots \geq \lambda_K$. After sending requests to systematic servers until they are saturated, the total residual arrival rate is $\lambda_{coded} = \sum_{i=1}^K(\lambda_i - \mu)^+$, as illustrated in \Cref{fig:mds_waterfill_2} for the $(6,3)$ MDS coded system. Assume that $\lambda_1 \geq \mu$. If this is not true, then $\lambda_{coded} = 0$ and all requests can be served by systematic servers.

The algorithm first uniformly splits $\min(\gamma_{K}(N-K)/k, \lambda_{coded})$ requests over $N-K$ servers, $K+1 ,\dots N$.  Then, for every $r = k, \dots, 2$, it uniformly splits $\min( (\gamma_{r-1}- \gamma_{r})(N-r+1)/k, \lambda_{coded})$ requests over $N-r+1$ servers, $r, r+1, \dots N$. Using this water-filling algorithm, the maximum rate of requests that can be supported using coded servers is
\begin{align}
\lambda_{max} &= \min(\lambda_K, \mu) \frac{N-K}{K} + \\
& \quad (\min(\lambda_{K-1},\mu) - \min(\lambda_{K}, \mu))\frac{N-K+1}{K}+ \dots +\\
& \quad (\min(\lambda_{1},\mu) - \min(\lambda_{2}, \mu))\frac{N-1}{K} \\
& = \min(\lambda_1, \mu) \frac{N}{K} - \sum_{i=1}^{K} \min(\lambda_i, \mu) \frac{1}{K} \\
&= \mu \frac{N}{K} - \sum_{i=1}^{K} \min(\lambda_i, \mu) \frac{1}{K}
\end{align}
In \Cref{fig:mds_waterfill_2}, the height of each patterned fill in the rightmost column, starting from the bottom upwards, corresponds to each term in the above summation.

The residual rate $\lambda_{coded}$ should be less than the total remaining service capacity using non-systematic servers.
\begin{align}
 \sum_{i=1}^{K}(\lambda_i - \mu)^+ &\leq  \mu \frac{N}{K} - \sum_{i=1}^{K} \min(\lambda_i, \mu) \frac{1}{K}
\end{align}
Rearranging, this is equivalent to \eqref{eqn:mds_outer_bnd}. Thus, for $N-K \geq K$, the waterfilling can achieve the region given by the outer bound in \Cref{thm:mds_outer_bnd}. Hence, it is optimal for $N-K \geq K$.
\end{proof}

\subsection{$N-K < K$: Waterfilling is optimal}
Next let us consider the second case $N-K < K$. For this case, we cannot always achieve the the same rate region as given by the outer bound. However, we can show that the waterfilling algorithm is optimal, and no other rate splitting scheme can yield a strictly larger service capacity region. This result follows from the two lemmas below.

\begin{lem}
It is optimal to first send requests to their systematic node. Only when the systematic node is saturated, requests should be served using coded servers.
\end{lem}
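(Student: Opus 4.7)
The plan is to prove this by a standard exchange argument: any feasible schedule that sends requests for $f_i$ to a coded recovering set while the systematic node for $f_i$ still has unused capacity can be strictly improved (or at least not worsened) by rerouting an infinitesimal amount of that traffic to the systematic node. Serving a request at the systematic node consumes one unit of capacity at one server, while serving it through a coded recovering set of size $K$ consumes $K$ total units (one at each of $K$ distinct servers). Hence rerouting is strictly capacity-saving whenever $K \geq 2$.

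Concretely, suppose we are given rates $(\lambda_1,\dots,\lambda_K)$ and a feasible allocation $\{\lamij{i}{j}\}$ satisfying \eqref{eq:LP-sum-constraint}–\eqref{eq:LP-non-negativity-constraint}. Let $\Rij{i}{0} = \{i\}$ denote the systematic recovering set of $f_i$, and denote the load on node $\ell$ by $\gamma_\ell = \sum_{i,j:\,\ell\in\Rij{i}{j}} \lamij{i}{j}$. Suppose there exist $i$ and $j \neq 0$ with $\lamij{i}{j} > 0$ and $\gamma_i < \mu$. I would define a new allocation by picking
\[
\epsilon = \min\bigl(\lamij{i}{j},\; \mu - \gamma_i\bigr) > 0,
\]
setting $\tilde\lamij{i}{j} = \lamij{i}{j} - \epsilon$ and $\tilde\lamij{i}{0} = \lamij{i}{0} + \epsilon$, and leaving all other variables unchanged. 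Then \eqref{eq:LP-sum-constraint} and \eqref{eq:LP-non-negativity-constraint} still hold. For \eqref{eq:LP-UB-constraint}, the load on the systematic node $i$ increases by $\epsilon$ to at most $\mu$, while the load on each of the $K$ coded servers in $\Rij{i}{j}$ decreases by $\epsilon$. Thus the new allocation is feasible, and the total slack $\sum_\ell (\mu - \gamma_\ell)$ strictly increases by $(K-1)\epsilon$.

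To conclude, I would iterate this swap, always eliminating an entire offending variable $\lamij{i}{j}$ or fully saturating the systematic node $i$. Since each swap strictly reduces the number of nonzero coded variables $\lamij{i}{j}$ (with $j \neq 0$) among files whose systematic node is unsaturated, the procedure terminates in finitely many steps, yielding a feasible allocation satisfying the desired property: $\lamij{i}{j} > 0$ for some $j \neq 0$ only if $\gamma_i = \mu$. Moreover, because the slack is monotonically nondecreasing throughout, the final allocation serves the same rate vector with total server load no larger than the original; hence if any rate vector is feasible at all, it is feasible under a schedule that prioritizes systematic nodes.

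The step I expect to require the most care is ensuring the iterative procedure terminates cleanly — i.e., choosing the swap granularity so that each step zeroes out either a coded allocation $\lamij{i}{j}$ or the residual systematic capacity $\mu - \gamma_i$, so that the count of "bad" (file, recovering set) pairs strictly decreases. Everything else is bookkeeping: the inequality $K \geq 2$ (implicit in having any coded recovering set at all) is what makes the exchange strictly beneficial in terms of freed capacity, though weak optimality only needs $K \geq 1$.
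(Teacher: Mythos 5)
Your proof is correct, and it is a genuinely different (and arguably cleaner) argument than the one in the paper. The paper's proof is split by parameter regime: for $N-K < K$ it argues that rerouting $\epsilon$ from the systematic node of $f_i$ to a coded recovering set must hit at least one \emph{other} systematic node (since there are only $N-K < K$ coded servers available in a recovering set of size $K$), thereby decrementing that file's systematic capacity by $\epsilon$; for $N-K \geq K$ the paper simply defers to Theorem~\ref{thm:waterfilling_opt}, where waterfilling is shown to meet the outer bound. Your exchange argument makes no case split: for any feasible allocation, reroute $\epsilon = \min(\lamij{i}{j}, \mu - \gamma_i)$ from a coded set back to the unsaturated systematic node, verify that all $N$ per-node constraints remain satisfied, and observe that the total slack rises by $(K-1)\epsilon \geq 0$, so repeated swaps (each of which either zeros a coded variable or saturates a systematic node) terminate at an allocation with the desired prioritized structure. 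What you gain is uniformity and rigor: you produce, for every point already in the capacity region, an explicit systematic-first allocation that serves it, rather than arguing counterfactually about what goes wrong if systematic nodes are underused. What the paper's case argument buys is a direct intuition for \emph{why} $N-K < K$ is the interesting regime (the recovering set is forced to overlap another systematic node), which your slack-counting argument does not surface. One small point worth making explicit if you formalize this: the potential function should be the number of pairs $(i,j)$ with $j \neq 0$, $\lamij{i}{j} > 0$, and $\gamma_i < \mu$ (``bad pairs''), not merely the number of nonzero coded variables, since a swap that saturates $\gamma_i$ eliminates all bad pairs for file $i$ at once without necessarily zeroing any $\lamij{i}{j}$.
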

\begin{proof}
For $N-K < K$, we show that not utilizing the systematic node can only add load to the system, and thus reduce its service capacity region. Suppose $\lambda_i < \mu$ for some $i$, that is all requests for $f_i$ can be served by the systematic node. Instead, suppose we serve $\lambda_i - \epsilon$ rate using the systematic node $i$, and send the remaining $\epsilon$ portion to $K$ other servers, and decode file $f_i$ from the coded versions. As a result we are reducing the load on the systematic node by $\epsilon$, and instead adding $\epsilon$ load to $K$ other servers. If $N-K < K$, at least one of these $K$ servers is also a systematic node, which stores file $f_j$. Thus, the maximum rate of requests for file $f_j$ that can be served by its systematic node reduces by $\epsilon$.

For $N-K > K$, we showed in \Cref{thm:waterfilling_opt} that the water-filling algorithm, which first sends requests to the systematic node is optimal. Thus, there is no loss of optimality in sending requests to the systematic node until it is saturated.

\end{proof}

\begin{lem}
After the systematic node is saturated, it is optimal to always send each request to the $K$ least-loaded servers that can serve it.
\end{lem}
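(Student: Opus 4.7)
The plan is to prove this via a local exchange argument. After systematic nodes are saturated (by the previous lemma), fix any feasible scheduling of the residual demand $\lambda_{\text{coded}}$, and let $\gamma_\ell$ denote the resulting load on server $\ell$. Suppose some positive rate $\delta > 0$ is routed through a recovering set $S$ of $K$ servers for a file $f_i$, but $S$ is not among the $K$ least-loaded servers eligible to serve $f_i$. Then there exist $s \in S$ and an eligible $s^* \notin S$ with $\gamma_{s^*} < \gamma_s$, and because the code is MDS, the set $S' = (S \setminus \{s\}) \cup \{s^*\}$ is also a valid recovering set for $f_i$.

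The first step is to verify that re-routing a portion $\epsilon$ with $0 < \epsilon \leq \min\{\delta,\, \gamma_s - \gamma_{s^*}\}$ from $S$ to $S'$ preserves feasibility: the load on $s$ drops by $\epsilon$, the load on $s^*$ rises by $\epsilon$ but stays bounded above by $\gamma_s \leq \mu$, and no other server load changes. Thus $\max_\ell \gamma_\ell$ is non-increasing and the total demand served on each file is unchanged. Iterating this swap, each step produces a load vector at least as balanced as the previous one, and the only configurations admitting no beneficial swap are precisely those in which every unit of residual demand is routed to the $K$ least-loaded eligible servers -- i.e., the output of the waterfilling algorithm.

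To convert this into a proof of optimality, I would argue that the limiting waterfilling load vector $\gamma^{\text{WF}}$ satisfies $\max_\ell \gamma^{\text{WF}}_\ell \leq \max_\ell \gamma_\ell$ for every alternative feasible allocation $\gamma$, so any arrival vector $(\lambda_1,\ldots,\lambda_K)$ supported by some scheduling is also supported by waterfilling. The main obstacle will be rigorously arguing convergence of the iterative swap procedure when the initial scheduling distributes traffic across a continuum of recovering sets. A cleaner alternative is to recast the claim as the convex program of minimizing $\max_\ell \gamma_\ell$ subject to~\eqref{eq:LP-sum-constraint}--\eqref{eq:LP-non-negativity-constraint}; standard KKT conditions then identify any optimal primal solution with the $K$-least-loaded rule, sidestepping the need to track repeated swaps.
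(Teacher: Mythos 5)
The proposal takes a genuinely different route from the paper. The paper's own argument is a terse case analysis on terminal configurations: it asserts that any scheduling of the residual traffic ends either with $R\geq K$ unsaturated servers at a common load (in which case all remaining capacity can be uniformly filled and the outer bound is met) or with exactly $K$ unsaturated servers at unequal loads (in which case the bottleneck is the heaviest of them), and then observes that waterfilling always reaches the first state when it is reachable and otherwise minimizes the bottleneck load. This leaves implicit both the exhaustiveness of the two states and the comparison to an arbitrary competitor. Your argument is instead a local exchange / majorization proof: start from an arbitrary feasible schedule and repeatedly shift $\epsilon$ mass within a recovering set from a heavier server $s$ to a lighter eligible server $s^{*}$ (valid because the MDS property makes $(S\setminus\{s\})\cup\{s^{*}\}$ a recovering set), so that $\max_{\ell}\gamma_{\ell}$ never increases and the fixed points are exactly the $K$-least-loaded schedules. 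This is an explicit dominance argument — for every feasible schedule it exhibits a waterfilling-type schedule that is at least as feasible — and is therefore tighter in structure than the paper's proof. The convergence concern you flag is real in the continuum setting, but the LP reformulation you propose, $\min_{\gamma}\max_{\ell}\gamma_{\ell}$ subject to \eqref{eq:LP-sum-constraint}--\eqref{eq:LP-non-negativity-constraint}, does resolve it; one small caveat is that the optimality conditions of that program characterize the optimal load \emph{vector} rather than the greedy rule itself, so you still need a one-line argument (essentially your exchange step, applied once to an optimal solution) to conclude that waterfilling attains that optimum. Both approaches establish the lemma; yours is the more standard and checkable of the two.
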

\begin{proof}
For each $\epsilon > 0$ rate of requests in $\lambda_{coded}$, we pick $K$ servers that will serve it. By using any algorithm for picking the $K$ servers, we will reach one of $2$ possible states:
\begin{enumerate}
\item $R \geq K$ unsaturated servers with the same load $\gamma < \mu$. Then we can split a maximum of $(\mu - \gamma)R/K$ request rate uniformly over these servers. As a result all servers will be saturated, and the outer bound will be achieved.
\item There are exactly $K$ unsaturated servers in the system with loads $\gamma_1 \geq \gamma_2 \geq \gamma_3 \geq \dots \geq \gamma_K$, where at least one of these inequalities is strict. Then the additional rate we can serve is $\mu -\gamma_1$. This would leave a non-zero amount of capacity unused.
\end{enumerate}
Since it always sends requests to the $K$ least-loaded nodes in the system, the water-filling algorithm always achieves the first state when it is feasible. And if the system ends up in the second state, water-filling minimizes $\gamma_1$.
\end{proof}

\section{Binary Simplex coded systems}
\label{sec:simplex}
Simplex codes are important subclass of availability codes. When files $f_1, \ldots, f_K$ are encoded with a binary $(N, K)$ simplex code, $N = 2^K-1$ must hold and a particular file $f_i$ can be recovered from $2^{K-1}-1$ (availability) disjoint groups of two (locality) servers. As an example, a $(7,3)$ simplex code encodes three files $\{f_1, f_2, f_3\}$ into seven as $\{f_1, f_2, f_3, f_1+f_2, f_1+f_3, f_2+f_3,f_1+f_2+f_3\}$. This code has availability three, e.g., file $f_1$ can be can be repaired from either $f_2$ and $f_1+f_2$ or $f_3$ and $f_1+f_3$, or $f_2+f_3$ and $f_1+f_2+f_3$.

Each file can be recovered from its systematic or any of its $2^{K-1}-1$ repair groups. Therefore, the request for each file can be served at rate $2^{K-1}\mu$ when the requests for all other files are zero.
\begin{lemma}
  Maximum sum of arrival rates $\lambda_1 + \ldots + \lambda_K$ that can be served by $(N, K)$ Simplex system is $2^{K-1}\mu$.
\label{lm_simplex_converse}
\end{lemma}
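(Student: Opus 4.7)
The plan is to sum the node-capacity constraints~\eqref{eq:LP-UB-constraint} not over all $N=2^K-1$ nodes, but only over a carefully chosen subset $T$ of size $2^{K-1}$ that acts as an \emph{exact transversal} of the recovering sets: every $\Rij{i}{j}$ meets $T$ in exactly one element. Once such a $T$ is exhibited, the usual swap of summation order immediately yields $\sum_i \lami{i}\le |T|\,\mu = 2^{K-1}\mu$.

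Identifying nodes with the nonzero vectors of $\mathbb{F}_2^K$ as in the standard simplex construction, I would take
\[
T=\{v\in\mathbb{F}_2^K\setminus\{0\}:\mathrm{wt}(v)\text{ is odd}\},
\]
where $\mathrm{wt}(\cdot)$ denotes Hamming weight. A standard parity count gives $|T|=2^{K-1}$, and each systematic vector $e_i$ automatically lies in $T$ since it has weight one. The combinatorial crux is the claim that $|\Rij{i}{j}\cap T|=1$ for every file $i$ and every recovering set $j$. The systematic recovering set $\{e_i\}$ trivially satisfies this. A non-systematic recovering set is a repair group, which for the binary simplex code has the form $\{v,v+e_i\}$ with $v\notin\{0,e_i\}$; its two entries differ by $e_i$ in exactly one coordinate, so their Hamming weights have opposite parities, and hence exactly one of them lies in $T$. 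This is the ``graph covering'' connection flagged in the introduction: $T$ simultaneously acts as a transversal for the $K$ perfect matchings (one per file) formed by the repair groups.

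Given the transversal property, summing~\eqref{eq:LP-UB-constraint} over $\ell\in T$ and exchanging the order of summation gives
\[
\sum_{i=1}^{K}\sum_{j}|\Rij{i}{j}\cap T|\,\lamij{i}{j}\ \le\ |T|\,\mu.
\]
Every intersection size on the left equals $1$ by the claim, so by~\eqref{eq:LP-sum-constraint} the left-hand side collapses to $\sum_i \sum_j \lamij{i}{j}=\sum_i \lami{i}$, while the right-hand side is $2^{K-1}\mu$, giving the claimed bound.

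I do not foresee a major obstacle: the only step requiring insight is \emph{finding} $T$, and once the odd-parity choice is on the table, verifying the transversal property reduces to a one-bit-flip observation. Achievability of the bound is immediate: placing all demand on a single file and splitting it uniformly across its $2^{K-1}$ recovering sets loads every server at exactly rate $\mu$, confirming that $2^{K-1}\mu$ is in fact attained.
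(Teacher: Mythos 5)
Your proof is correct and rests on essentially the same key observation as the paper's: the odd-Hamming-weight subset of the $2^K-1$ nonzero vectors is precisely the vertex cover the paper identifies after noting that the Simplex repair graph is bipartite by weight parity (with the systematic vertices, having weight one, on the odd side). Your ``exact transversal'' framing and one-bit-flip verification are a slightly more direct rendering of what the paper establishes via the bipartition and the edge count $|E|/K=2^{K-1}$, but the underlying combinatorial fact and the resulting bound are the same.
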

\begin{proof}
  \begin{figure}[hbt]
    \centering
    \begin{tikzpicture}
      \node at (0,0) {\includegraphics[scale=1]{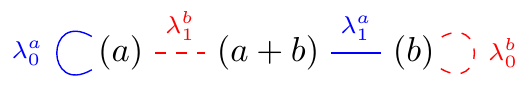}};
      \node at (0,-3) {\includegraphics[scale=1]{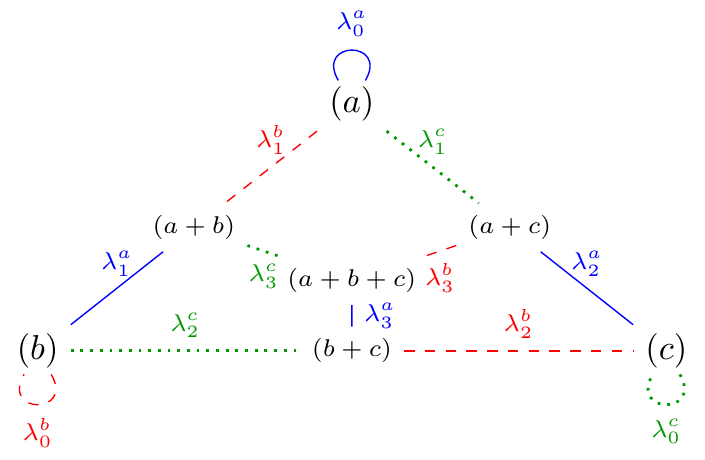}};
    \end{tikzpicture}
    \vspace*{-0.15cm}
    \caption{Graph representation of Simplex code for $K=2$ (Top) and $K=3$ (Bottom), that is inspired by the Fano plane~\cite{FanoPlane:Weisstein17}. Files are denoted as $a, b, \ldots$. Vertices correspond to servers and edges refer to repair of a file using either systematic servers (loops) or repair groups (edges between two vertices)}
  \label{fig:fig_fano_simplex}
  \end{figure}

  Fig.~\ref{fig:fig_fano_simplex} shows graph representation (i.e., Fano plane) of Simplex code for $K=2, 3$. Vertices correspond to servers and file stored on each server is indicated by its label. Each edge corresponds to service of a particular file from a systematic server (reflective loops on a server) or a repair group (edges between two servers). Recall that repairing a file from one of its repair groups requires accessing two servers, hence supplying a unit of service rate from a repair group consumes twice the capacity of supplying it from a systematic server. Service rates are shown with label $\lambda_i^f$'s on each edge such that $f \in \{a, b, \ldots\}$ denotes the file that is served and $i \in \{0, 1, 2, \ldots\}$ is an index to differentiate between the edges that serve the same file. Sum of the service rates supplied from edges that share the same vertex cannot be greater than $\mu$.
  
  Firstly consider $K=2$ system. Total service rate supplied by the system is
  \[\lambda_a + \lambda_b = \lambda_0^a + \lambda_1^a + \lambda_0^b + \lambda_1^b\]
  which is the sum of the service rates supplied by all the edges in $K=2$ graph. Edges with label $\lambda_0^a$ and $\lambda_1^b$ are attached to vertex $(a)$, hence $\lambda_0^a + \lambda_1^b \leq \mu$. Similarly, $\lambda_0^b + \lambda_1^a \leq \mu$. Thus
  \[\lambda_a + \lambda_b \leq 2\mu\]
  
  Secondly consider $K=3$ system. Total service rate supplied by the system is
  \[\lambda_a + \lambda_b + \lambda_c = \sum_{i=0}^3 \lambda_i^a + \lambda_i^b + \lambda_i^c\]
  All edges in the graph are covered by the edges attached to vertices $(a)$, $(b)$, $(c)$ and $(a+b+c)$. Therefore we can conclude
  \[\lambda_a + \lambda_b + \lambda_c \leq 4\mu\]
  
  In general, total number of edges $|E|$ in the graph of a Simplex code for $K \geq 2$ can be written as
  \[ |E| = \frac{|V|K + K}{2}\]
  where number of vertices $|V| = 2^K - 1$.
  
  Simplex is a binary linear code with generator matrix consisting of all size-$K$ bit vectors up to but not including vector of all ones. For instance,  for $K=3$, the generator matrix is
  \[
     \begin{bmatrix}
    0 & 0 & 0 & 1 & 1 & 1 & 1 \\
    0 & 1 & 1 & 0 & 0 & 1 & 1 \\
    1 & 0 & 1 & 0 & 1 & 0 & 1 \\
\end{bmatrix} \]
  Every vertex in the graph of a Simplex code can be associated with the corresponding bit vector. For instance for $K=3$, $a = [0, 0, 1]$, $b = [0, 1, 0]$ and $a+b = [0, 1, 1]$. Ignoring the loops on systematic vertices, there is an edge between two vertices if and only if corresponding bit vectors differ in a single bit (so that a symbol can be repaired from the two vertices). Then, graph of any Simplex code is a bipartite graph such that vertices that correspond to bit vectors with even number of ones can be separated from those with odd number of ones. All edges (excluding the loops) are covered by either one of the partitions. To cover also the loops, we need to pick the partition that includes the systematic vertices (i.e., bit vectors with a single one). In this chosen partition, every vertex has $K$ edges attached and no two vertices share any edge, therefore, number of vertices in the partition is $|E|/K = 2^{K-1}$.
  
  Overall, for any $K \geq 2$, there exists of a set of $2^{K-1}$ vertices that cover all the edges in the graph. Then total service rate that can be supplied by the system can be bounded as
  \[\lambda_1 + \lambda_2 + \ldots + \lambda_K \leq 2^{K-1}\mu\]
\end{proof}

Using \Cref{lm_simplex_converse}, we show that capacity region of an $(N, K)$ Simplex system is the simplex geometry in $R^K$.
\begin{theorem}
  $(N, K)$ Simplex system can serve arrival rates $\lambda_1, \ldots, \lambda_K$ if and only if $\lambda_1 + \dots + \lambda_K \leq 2^{K-1}\mu$.
\label{eq:simplex}
\end{theorem}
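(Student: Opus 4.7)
The converse direction ($\Rightarrow$) is already handed to us by \Cref{lm_simplex_converse}, which shows that any feasible arrival rate vector must satisfy $\lambda_1 + \cdots + \lambda_K \leq 2^{K-1}\mu$. So the plan reduces entirely to establishing the achievability direction: every nonnegative vector whose coordinates sum to at most $2^{K-1}\mu$ lies in the service capacity region.

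My plan is to leverage the convexity of the capacity region together with the symmetry of the simplex code. First, I would observe that the set of feasible rate vectors defined by \Cref{def:service-capacity-region} is the projection of the polytope cut out by \eqref{eq:LP-sum-constraint}--\eqref{eq:LP-non-negativity-constraint} onto the $(\lami{1},\ldots,\lami{K})$ coordinates, hence convex; and the origin is trivially feasible. So it suffices to show that each extreme point of the claimed region on the coordinate axes, namely $\lami{i}=2^{K-1}\mu$ with all other rates zero, is achievable. By convexity, taking arbitrary convex combinations of the origin and these $K$ axis points sweeps out exactly the simplex $\{\lami{i}\geq 0 : \sum_i \lami{i}\leq 2^{K-1}\mu\}$.

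To achieve the axis point for file $f_i$, I would exploit the structural fact that in a binary $(2^K-1, K)$ simplex code, the $2^{K-1}-1$ repair groups of $f_i$ are pairwise disjoint groups of size two, and together with the systematic node for $f_i$ they partition all $N=2^K-1$ nodes. (This is precisely the statement that in the coset graph of \Cref{fig:fig_fano_simplex}, the edges incident to vertex $(f_i)$ plus the loop at $(f_i)$ form a perfect matching of the vertex set containing $f_i$'s systematic node.) Serve $\mu$ of $f_i$'s requests from the systematic node, and from each of the $2^{K-1}-1$ repair groups serve $\mu$ of $f_i$'s requests by loading both of its servers at rate $\mu$. Each of the $N$ nodes is then used at exactly its capacity $\mu$, exactly one file is served, and the supported rate is $\mu + (2^{K-1}-1)\mu = 2^{K-1}\mu$, as required.

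The main obstacle will be making the partition-of-nodes claim precise and citing it cleanly. It is immediate from the bipartite/coset description in the proof of \Cref{lm_simplex_converse}: all nonzero length-$K$ binary vectors other than the systematic unit vector $e_i$ pair up uniquely with their XOR with $e_i$, and the two vectors in each pair differ only in coordinate $i$, so their corresponding codeword symbols sum to $f_i$. Everything else is then a short bookkeeping step, and the convex-combination argument is a one-liner. Thus the capacity region contains the claimed simplex, and together with \Cref{lm_simplex_converse} the two regions coincide.
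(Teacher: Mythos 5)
Your achievability argument is essentially the paper's: the paper's fractional allocation (dedicate a fraction $\alpha_i$ of every server's capacity exclusively to file $f_i$) is the operational realization of your convex combination of the origin with the $K$ axis points, and both rest on the same structural fact --- a single file can be served at rate $2^{K-1}\mu$ because its systematic node together with its $2^{K-1}-1$ disjoint repair groups partition all $N=2^K-1$ nodes. Your explicit appeal to convexity of the LP feasible region and reduction to corner points is a slightly cleaner way to package the same time-sharing idea. One small slip in a non-load-bearing aside: in \Cref{fig:fig_fano_simplex}, the $f_i$-repair edges are those joining bit vectors that differ only in coordinate $i$, and none of them is incident to the vertex $(f_i)$ --- the non-loop edges touching $(f_i)$ serve the \emph{other} files --- so ``the edges incident to vertex $(f_i)$ plus the loop at $(f_i)$'' do not form the matching you describe. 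Your subsequent XOR/coset derivation of the partition is correct, and that is what the proof actually uses.
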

\begin{proof}
  If every server (systematic or not) dedicates the fraction $\alpha_i/2^{K-1}$ of its capacity solely to serving requests for file $f_i$, then the part of the system dedicated to $f_i$ acts as a $(N, K)$ binary simplex code on $2^K-1$ servers, each with capacity $\alpha_i/2^{K-1}\mu$ serving exclusively requests for file $f_i$, giving the supplied service rate of $\lambda_i = 2^{K-1}\mu\alpha_i$. By construction, inequality $\alpha_1 + \ldots + \alpha_K \leq 1$ always has to hold, and thus every achievable service rate tuple $(\lambda_1, \ldots, \lambda_K)$ can be realized by the corresponding choice of fraction tuple $(\alpha_1, \ldots, \alpha_K)$. This observation together with \Cref{lm_simplex_converse} shows that achievable capacity region of the system is a simplex in $R^K$.
\end{proof}

\section{Effect of Adding Systematic Nodes}
\label{sec:adding_sys_nodes}


Suppose we have $K=2$ files $a$ and $b$ stored across a storage system of $N$ cache nodes. Denote the arrival rates of requests for $a$ and $b$ as  $\lambda_a$ and $\lambda_b$, respectively. 
In all that follows,  we assume any $2$ coded nodes or a coded node and systematic node may recover file $a$ and file $b$. The service capacity region will be denoted by $\mathcal{S}$. Moreover,  $A$ is the number of systematic nodes for file $a$, $B$ is the number of systematic nodes for file $b$, and $C$ is the number of coded nodes.  In this section, we identify the service capacity region of such storage systems.


Let $\lambda^*_a$ denote the maximum demand for $a$ that can be supported by a given storage system.  Thus,  there exists some splitting strategy for  requests to the storage system that handles demand  $\lambda^*_a$ for file $a$.   For every $\lambda_a \leq  \lambda^*_a$ this guaranteed splitting strategy also supports demand $(\lambda_a, 0)$.  Also,  given  expected wait time $\mu$ for each of the $N$ nodes, any demand $\lambda_b>N \mu$ cannot be supported by the storage system.  In this way,  given any fixed demand $\lambda_a\leq \lambda^*_a$, the set of all of supported $\lambda_b$ is a non-empty, closed subset of $\mathbb{R}$ that is bounded above by  $N\mu$. Thus, there is a maximum such $\lambda_b$, with $(\lambda_a,\lambda_b)$ in the service capacity region of the storage system. Define $L(\lambda_a)$ to be this maximum supported $\lambda_b$  at given $\lambda_a$.  With this definition, the function
\begin{align}
L :\quad &[0,\lambda^*_a]\rightarrow\mathbb{R}\nonumber\\
& \lambda_a \mapsto L(\lambda_a)
\end{align}
is well-defined.  The storage system's service capacity region  can be described as the subset of $ \mathbb{R}^2$ that is bounded by   $\lambda_a= 0$, $\lambda_a = \lambda^*_a$, $\lambda_b = 0$ , and $\lambda_b = L(\lambda_a)$.   For convenience, further denote  $\lambda^*_b = L(0)$.

\begin{lemma}
\label{allcoded}
If  $A=B=0$ and there are $C>1$ coded nodes, then $\mathcal{S}$ \rmv{If there are no systematic nodes and $C>1$ coded nodes \rmv{such that any $2$ nodes may recover file $a$ and file $b$},
then the service capacity region} 
is the region bounded by $\lambda_a=0,\lambda_b=0,$ and  $\lambda_b=\frac{N}{2}\mu-\lambda_a$. If there are $C\leq 1$ coded nodes and no systematic nodes, then the service capacity region is the point $(0,0).$
\end{lemma}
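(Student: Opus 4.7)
The plan is to treat the two regimes in the statement separately. For the degenerate case $C \le 1$, I would note that, since $A = B = 0$, every recovering set (under the section's blanket assumption that any two coded nodes recover either file) has size exactly $2$. With $C = 0$ there are no nodes at all, and with $C = 1$ no pair of coded nodes exists, so neither file can be served at any positive rate and $\mathcal{S} = \{(0,0)\}$.

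For $C > 1$, I would establish the triangular region by matching converse and achievability arguments. The outer bound $\lambda_a + \lambda_b \le \tfrac{N}{2}\mu$ follows from a double-counting argument: since every recovering set used has size $2$, each unit of $\lamij{i}{j}$ contributes one unit of load to each of exactly $2$ nodes. Summing the per-node capacity constraint~\eqref{eq:LP-UB-constraint} over all $N = C$ nodes and substituting~\eqref{eq:LP-sum-constraint} gives $2(\lambda_a + \lambda_b) \le N\mu$, i.e., $\lambda_b \le \tfrac{N}{2}\mu - \lambda_a$.

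For the matching achievability, I would exhibit a uniform splitting over all $\binom{N}{2}$ pairs of coded nodes. Given any $(\lambda_a,\lambda_b)$ with $\lambda_a,\lambda_b \ge 0$ and $\lambda_a + \lambda_b \le \tfrac{N}{2}\mu$, split the request rate of each file evenly over all $\binom{N}{2}$ pairs. Each node belongs to exactly $N-1$ pairs, so its load becomes
\[
(N-1)\cdot\frac{\lambda_a + \lambda_b}{\binom{N}{2}} \;=\; \frac{2(\lambda_a+\lambda_b)}{N} \;\le\; \mu,
\]
which together with non-negativity and the sum constraint verifies all three conditions of Definition~\ref{def:service-capacity-region}. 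Hence every such $(\lambda_a,\lambda_b)$ lies in $\mathcal{S}$, matching the outer bound.

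The argument is essentially bookkeeping rather than substance; the only point requiring care is that the section's standing assumption makes every pair of coded nodes a valid recovering set for either file, which is exactly what renders the uniform split well-defined and symmetric. I do not anticipate a substantive obstacle beyond cleanly handling the $C \le 1$ corner case and invoking the constraints of Definition~\ref{def:service-capacity-region} in the right direction.
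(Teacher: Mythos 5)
Your proof is correct, and the achievability construction genuinely differs from the paper's. The paper pairs the $C$ nodes in a cycle (node $i$ with node $i+1 \bmod C$, giving exactly $C$ pairs with each node in two of them), routes $\lambda_a$ evenly over those $C$ pairs, and then fills the residual per-node capacity with $b$-requests; this is a sequential, ``$a$ first, then $b$'' argument. You instead split both $\lambda_a$ and $\lambda_b$ uniformly over all $\binom{N}{2}$ pairs and verify the per-node load directly, which is more symmetric and avoids any bookkeeping about residual capacity. Your converse is also more explicit: the paper only notes $\lambda_a \le \tfrac{C}{2}\mu$ and $\lambda_b \le \tfrac{C}{2}\mu$ separately (which does not by itself yield the diagonal face), whereas your double-counting argument --- summing the constraint \eqref{eq:LP-UB-constraint} over all nodes and using that each recovering set has size two --- directly gives $\lambda_a + \lambda_b \le \tfrac{N}{2}\mu$, which is the bound actually needed. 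The paper's cycle construction is more economical (it uses only $C$ of the $\binom{C}{2}$ pairs, which may matter if one cares about the number of distinct recovering sets invoked), while yours is cleaner to state and verify; both close the gap with the outer bound, so the resulting capacity region is the same.
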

\begin{proof}  
Since each node can support rate $\mu$ of arrivals and recovering either file requires the use of two coded nodes, 
$\lambda_a\leq \frac{C}{2}\mu$ and $\lambda_b\leq \frac{C}{2}\mu$. If $C\leq 1$, no file can be recovered. Suppose $C>1$ and label the nodes $1,\dots,C$. For each  $i=1,\dots,C$, pair node  $i$ with node $i+1\mod C$. Note that each node is in two pairs. Requests for file $a$ may be evenly divided among the $C$ pairs, and requests for file $b$ can utilize the remaining  $\mu - 2\frac{\lambda_a}{C}$ capacity of each node, with half of the node's remaining capacity devoted to each of the two pairs the node is in. Thus, the maximum achievable $\lambda_b$ is \begin{align} \lambda_b=C\frac{1}{2}\left(\mu - 2\frac{\lambda_a}{C}\right)=\frac{C}{2}\mu - \lambda_a. \end{align}
\end{proof}

Note that the conclusion of Lemma \ref{allcoded} could  be expressed as  $\lambda^*_{a}=\frac{C}{2} \mu$ and  $L(\lambda_a) = \frac{C}{2} \mu -\lambda{a}$. Also, $\lambda^*_{b}=\frac{C}{2}\mu$. In this case, the boundary $\lambda_a = \lambda^*_a$ is redundant.

\begin{lemma}
\label{adding}

%
%
%
%

\rmv{Let there be $A$ systematic nodes for file $a$, $B$ systematic nodes for file $b$, and $C$ coded nodes. Assume any $2$ coded nodes or a coded node and systematic node may recover file $a$ and file $b$. Let $\mathcal{S}$ denote the service capacity region,}
\rmv{Let $(\lambda_a^*, 0)$ denote the $\lambda_a$-intercept of $\mathcal{S}$,  $L(\lambda_a)$ denote the boundary of $\mathcal{S}$ that corresponds to the maximum achievable $\lambda_b$, and $\lambda^*_b = L(0)$.} 
Given a storage system $\mathcal{S}$ with $N = A + B + C$ nodes:

%

\begin{itemize}
\item Case 1: If $A < C$ and a systematic node is added for file $a$, then the service capacity region
$\mathcal{S'}$ has $\lambda_b$-bound

\begin{align}
L'(\lambda_a) =
\begin{cases}
-\frac{1}{2}\lambda_a +  \frac{A+C+1}{2}\mu + B\mu, & \text{$0 \leq \lambda_a \leq \mu$} \\
L(\lambda_a - \mu), & \text{$\mu \leq \lambda_a \leq \lambda_a^* + \mu.$}
\end{cases}
\end{align}

\item Case 2: If $A \geq C$ and a systematic node is added for file $a$, then the service capacity region
$\mathcal{S'}$ has $\lambda_b$-bound

\begin{align}
L'(\lambda_a) =
\begin{cases}
\lambda_b^*, & \text{$0 \leq \lambda_a \leq \mu$} \\
L(\lambda_a - \mu), & \text{$\mu \leq \lambda_a \leq \lambda_a^* + \mu.$}
\end{cases}
\end{align}

\item Case 3: If $B < C$ and a systematic node is added for file $b$, then the service capacity region
$\mathcal{S'}$ has $\lambda_b$-bound $L'(\lambda_a) =$

\begin{align}
\begin{cases}
L(\lambda_a) + \mu, & \text{$0 \leq \lambda_a \leq \lambda_a^*$} \\
-2\lambda_a + (2A + B + C + 1)\mu, & \text{$\lambda_a^* \leq \lambda_a \leq \lambda_a^* + \frac{\mu}{2}.$}
\end{cases}
\end{align}

\item Case 4: If $B \geq C$ and a systematic node is added for file $b$, then the service capacity region
$\mathcal{S'}$ has $\lambda_b$-bound  
$$L'(\lambda_a) = L(\lambda_a) + \mu\quad\text{for $0 \leq \lambda_a \leq \lambda_a^*$}.$$ 
\end{itemize}
\end{lemma}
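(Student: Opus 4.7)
The plan is to formulate the service capacity region as a six-variable LP and prove each of the four cases by exhibiting an explicit achievable allocation together with a matching converse obtained as a nonnegative combination of the capacity constraints. Let $s_a$ (resp.\ $s_b$) denote the total rate at which systematic-$a$ (resp.\ systematic-$b$) nodes serve requests for $a$ (resp.\ $b$); let $y_{ab}$ (resp.\ $y_{ba}$) denote the rate at which (systematic-$a$, coded) pairs serve $b$ (resp.\ (systematic-$b$, coded) pairs serve $a$); and let $z_a, z_b$ be the rates at which coded-coded pairs serve $a$ and $b$. Writing $A',B'$ for the post-addition counts, the capacity constraints are
\begin{align*}
s_a + y_{ab} &\leq A'\mu, \quad s_b + y_{ba} \leq B'\mu, \\
y_{ab} + y_{ba} + 2(z_a + z_b) &\leq C\mu,
\end{align*}
together with $\lambda_a = s_a + y_{ba} + z_a$, $\lambda_b = s_b + y_{ab} + z_b$, and nonnegativity. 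Two facts read off from the LP will be used repeatedly: if $A \leq C$ then $\lambda_b^* = B\mu + \tfrac{A+C}{2}\mu$, and symmetrically, if $B \leq C$ then $\lambda_a^* = A\mu + \tfrac{B+C}{2}\mu$.

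For Cases 1 and 2 (adding a systematic-$a$ node), the regime $\mu \leq \lambda_a \leq \lambda_a^* + \mu$ is handled uniformly: dedicate the new node's $\mu$ capacity entirely to $a$ and apply the original optimal strategy to the residual demand $(\lambda_a - \mu, \lambda_b)$, giving $L'(\lambda_a) \geq L(\lambda_a - \mu)$; the converse is a short exchange argument showing that any allocation can be modified so the new node serves only $a$ without reducing $\lambda_b$, after which the remaining allocation is feasible for the original system. The cases diverge on $[0, \mu]$. In Case 1 ($A < C$), the inequality $A+1 \leq C$ leaves enough coded capacity to pair every systematic-$a$ node fully; serving $\lambda_a$ of the new node for $a$ and $\mu - \lambda_a$ for $b$ via pairing (with the other $A$ systematic-$a$ nodes fully paired and residual coded capacity absorbed by $z_b$) yields $y_{ab} = (A+1)\mu - \lambda_a$ and produces the slope $-\tfrac12$ boundary. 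The converse combines the three capacity inequalities with weights $(\tfrac12, 1, \tfrac12)$ to obtain $\lambda_b + \tfrac12 \lambda_a + y_{ba} + \tfrac12 z_a \leq \tfrac{A+C+1}{2}\mu + B\mu$, and dropping the nonnegative terms $y_{ba}, \tfrac12 z_a$ gives the claim. In Case 2 ($A \geq C$), the coded nodes are already the bottleneck for serving $b$ via systematic-$a$ pairings, so $\lambda_b^*$ is unchanged by the addition; the new node simply absorbs up to $\mu$ of $\lambda_a$ at no cost to $\lambda_b$, producing the constant plateau.

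Cases 3 and 4 (adding a systematic-$b$ node) share the piece $L'(\lambda_a) = L(\lambda_a) + \mu$ on $[0, \lambda_a^*]$, obtained by augmenting any original-system strategy with $\mu$ of direct $b$-service on the new node; the converse here follows from the $B'$-augmented systematic-$b$ constraint alone. In Case 3 ($B < C$), residual coded capacity persists beyond $\lambda_a^*$: sacrificing $\delta$ of the new node's $b$-service to pair with a coded node contributes $\delta$ to $y_{ba}$, but to free $\delta$ units of coded capacity one must reduce $z_a$ by $\delta/2$, yielding a net $\lambda_a$-gain of $\delta/2$ for a $\lambda_b$-loss of $\delta$ and hence slope $-2$; the extension terminates at $\lambda_a^* + \mu/2$, when the new node is fully repurposed. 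In Case 4 ($B \geq C$), coded-node pairings with systematic-$b$ nodes are already saturated in the old optimum, so no $\lambda_a$-extension is possible and $L'$ is merely the vertical shift of $L$.

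The main obstacle is identifying the correct dual weights for the converses in Cases 1 and 3; once these weights are found the bound follows by algebra. The splits $A \lessgtr C$ and $B \lessgtr C$ reflect which of the systematic-node or the coded-node constraint is tight in the original optimal allocation, and this choice controls whether the new node opens a fresh linear slice of the boundary or simply translates an existing piece.
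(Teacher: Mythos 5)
Your proof takes a genuinely different route from the paper's. The paper proves achievability by an explicit node-pairing construction that requires splitting into parity subcases (Subcase~1 with $A+C$ odd, Subcase~2 with $A+C$ even, invoking Lemma~\ref{allcoded} on the residual coded capacity), and then states ``a similar argument may be used'' for Cases~2--4; nowhere does the visible proof supply a converse certifying that those constructions are optimal. You instead pool the node-level constraints into a six-variable LP with aggregate decision variables $s_a,s_b,y_{ab},y_{ba},z_a,z_b$, exhibit explicit feasible allocations for achievability, and — crucially — supply dual certificates (the weights $(\tfrac12,1,\tfrac12)$ in Case~1) for the converse, plus an exchange argument for the shifted piece $L(\lambda_a-\mu)$. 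This is cleaner (no parity case split) and more complete (the paper leaves the optimality direction implicit); I verified the Case~1 dual computation gives $\lambda_b+\tfrac12\lambda_a+y_{ba}+\tfrac12 z_a\le\tfrac{A+C+1}{2}\mu+B\mu$ as you state.

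Two caveats worth noting. First, your pooled coded constraint $y_{ab}+y_{ba}+2(z_a+z_b)\le C\mu$ is only equivalent to the true node-by-node feasibility when $C\ge 2$: for $C=1$ the LP relaxation admits $z_a,z_b>0$ even though no coded--coded pair exists, and indeed the lemma's slope-$\tfrac12$ formula fails there (a direct check with $A'=1,B=0,C=1$ gives $L'(\lambda_a)=\mu-\lambda_a$, not $\mu-\tfrac12\lambda_a$). The paper's construction has the same blind spot, so this is a shared degenerate case rather than a defect unique to your argument, but you should state $C\ge2$ as a standing hypothesis. Second, in Case~3 the phrase ``the converse follows from the $B'$-augmented systematic-$b$ constraint alone'' is too strong: the single inequality $s_b+y_{ba}\le B'\mu$ does not bound $\lambda_b=s_b+y_{ab}+z_b$. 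What is true (and what I believe you mean) is that re-applying the dual certificate that tightly bounds $L(\lambda_a)$ in $\mathcal{S}$, with $B$ replaced by $B'=B+1$, yields $L(\lambda_a)+\mu$, because the dual weight on the systematic-$b$ constraint equals $1$ on every linear piece of $L$. Spell that out explicitly.

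Finally, the exchange argument you invoke for the $\mu\le\lambda_a\le\lambda_a^*+\mu$ regime deserves one more line: you should show that redirecting $\epsilon$ of $a$-demand from any other recovering set onto the new systematic-$a$ node, while reducing its $b$-pairing by $\epsilon$, frees at least $\epsilon$ of combined capacity elsewhere to restore $\lambda_b$. As written it is a plausible sketch, not a proof; the paper's version (``send $\mu$ requests to the new node, then the available nodes reduce to the previous system'') is likewise only the achievability half.
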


\begin{proof}
\rmv{Let $(0, \lambda_b^*)$ denote $\lambda_b$-intercept of $\mathcal{S}$, $(\lambda_a^*, 0)$ denote $\lambda_a$-intercept of $\mathcal{S}$, and $L$ denote the boundary of $\mathcal{S}$ that corresponds to the maximum achievable $\lambda_b$.} 
\textbf{Case 1:} First, consider when $0 \leq \lambda_a \leq \mu$.

\textit{Subcase 1 ($A + C$ is odd):} Since $0 \leq \lambda_a \leq \mu$, requests for file $a$ may be divide evenly among the $A + 1$ systematic nodes for file $a$.   Since a systematic node and coded node can recover both files and $A < C$, every systematic node for file $a$ can be paired with a coded node and requests for file $b$ can utilize the remaining capacity $\mu - \frac{\lambda_a}{A+1}$. Thus, $A + 1$ coded nodes now have capacity $\mu' = \frac{\lambda_a}{A+1}$. From Lemma \ref{allcoded}, we know they can support $\frac{A + 1}{2} \mu' =  \frac{\lambda_a}{2}$ requests for file $b$. We can also pair off the remaining coded nodes, the number of which is even since $A + C $ is odd. They can be utilized their full capacity $\mu$. Note, the $B$ systematic nodes for file $b$ can support a rate $\mu$ of arrivals for file $b$. Thus, the maximum achievable $\lambda_b$ is 
\[
\lambda_b = -\frac{\lambda_a}{2}+ \frac{A + C + 1}{2}\mu + B\mu.
\]

\textit{Subcase 2 ($A + C$ is even):} Since $0 \leq \lambda_a \leq \mu$, requests for file $a$ may be divide evenly among $A$ of the systematic nodes for file $a$. As above, every systematic node for file $a$ can be paired with a coded node, and each pair can support $\frac{A}{2} \mu' =  \frac{\lambda_a}{2}$ requests for file $b$. Note, one of the systematic nodes for file $a$ has not received any requests. We can form a triple with this systematic node and two coded nodes to serve $1.5 \mu$ requests for file $b$. 
Also, the $B$ systematic nodes for file $b$ can support a rate $\mu$ of arrivals for file $b$. Thus, the maximum achievable $\lambda_b$ is the same as Subcase 1.

Now, if $\mu \leq \lambda_a \leq \lambda_a^* + \mu$, then we may send $\mu$ requests for file $a$ to the systematic node for file $a$ that was added to the system. Then the system of available nodes reduces to the previous system, $L(\lambda_a - \mu)$. 
A similar argument may be used to prove \textbf{Cases 2, 3, and 4}.
\end{proof}

Note that in Lemma \ref{adding}, the region $\mathcal{S}$ has 
$\lambda^*_b=\frac{\mu}{2}(A+C+2B)$,  so  it would be equivalent in Case 1 to specify $L'(\cdot)$ on $0\leq\lambda_a \leq \mu$ as  $L(\lambda_a) =-\frac{1}{2} \lambda_a  + \frac{1}{2}\mu+ \lambda^*_b$. In this way, when $A<C$ the addition of a systematic node for file $a$ adds a ``bonus"  region beyond the right-shift by $\mu$ that is seen both in Case 2 and when such a node is added to an uncoded system.  A similar ``bonus" region is added in Case 3.  
Figure \ref{2C1Aand2C3A} pictures the resulting rate region for systems with $B=0$. 
\begin{figure}[h!]
\begin{center}
\begin{tikzpicture}[thick,scale=.9]
\draw[->,thick] (0,0)--(2.5,0) node[right]{$\lambda_a$};
\draw[->,thick] (0,0)--(0,2) node[above]{$\lambda_b$};
\draw[fill=blue,fill opacity=0.2] (0.8,0) --(0.8,1.2)--(2.1,0)--cycle;
\draw[fill=red,fill opacity=0.2] (0,0)--(0.8,0)--(0.8,1.2)--(0,1.6)--cycle;
\node[circle,fill=black,scale=.5,label=left:$\frac{C}{2}\mu$] (x0) at (0,1.2) {};
\node[circle,fill=black,scale=.5,label=below:$A\mu$] (x1) at (0.8,0) {};
\node[circle,fill=black,scale=.5,label=below:$(\frac{C}{2}+A)\mu$] (x2) at (2.1,0) {};
\node[circle,fill=black,scale=.5,label=left:$\frac{C+A}{2}\mu$] (x3) at (0,1.6) {};
\node[circle,scale=.5,label=below left:$0$] (O) at (.1,.1) {};
\end{tikzpicture}
\begin{tikzpicture}[thick,scale=.8]
\draw[->,thick] (0,0)--(3.7,0) node[right]{$\lambda_a$};
\draw[->,thick] (0,0)--(0,2.4) node[above]{$\lambda_b$};
\draw[fill=blue,fill opacity=0.2] (2.1,0) --(2.1,1.5)--(3.3,0)--cycle;
\draw[fill=red,fill opacity=0.2] (0.8,0)--(2.1,0)--(2.1,1.5)--(0.8,2)--cycle;
\draw[fill=red,fill opacity=0.4] (0,0)--(0.8,0)--(0.8,2)--(0,2)--cycle;
\node[circle,fill=black,scale=.5,label=left:$\frac{C}{2}\mu$] (x0) at (0,1.5) {};
\node[circle,fill=black,scale=.5,label=below:$A\mu$] (x1) at (2.1,0) {};
\node[circle,fill=black,scale=.5,label=below:$(\frac{C}{2}+A)\mu$] (x2) at (3.3,0) {};
\node[circle,fill=black,scale=.5,label=left:$C\mu$] (x3) at (0,2) {};
\node[circle,fill=black,scale=.5,label=below:$(A-C)\mu$] (x3) at (0.8,0) {};
\node[circle,scale=.5,label=left:$0$] (O) at (0,0) {};
\end{tikzpicture}
\end{center}
\caption{$\mathcal{S}$ when $B=0$ and (\emph{left}) $A\leq C$,  (\emph{right}) $A>C$.}
\label{2C1Aand2C3A}
\end{figure}
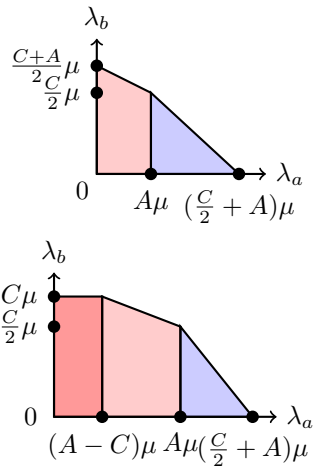

\begin{theorem}
\rmv{Let there be $A$ systematic nodes for file $a$, $B$ systematic nodes for file $b$, and $C$ coded nodes. Assume any $2$ coded nodes or a coded node and systematic node may recover file $a$ and file $b$. Then t} The service capacity region is bounded by $\lambda_a=0, \lambda_b=0, \lambda_a=\min\{(A+C)\mu, (A+\frac{B}{2}+\frac{C}{2})\mu\}$, and $L(\lambda_a) =$ \[\begin{cases} 
      (B+C)\mu & \text{if } A> C \text{ and } \\&0\leq\lambda_a\leq (A-C)\mu \\
      -\frac{1}{2}\lambda_a+(\frac{A}{2}+B+\frac{C}{2})\mu& \text{if } A >C  \text{ and }\\ & (A-C)\mu<\lambda_a\leq A\mu \\
      -\frac{1}{2}\lambda_a+(\frac{A}{2}+B+\frac{C}{2})\mu& \text{if } A\leq C \text{ and }  \\& 0\leq \lambda_a\leq A\mu\\
        -\lambda_a+(A+B+\frac{C}{2})\mu& \text{if } A\mu < \lambda_a\leq (A+\frac{C}{2})\mu  \\
      -2\lambda_a+(2A+B+C)\mu& \text{if }B> C \text{ and } \\&  (A+\frac{C}{2})\mu <\lambda_a\leq A+C \\
      -2\lambda_a+(2A+B+C)\mu& \text{if } B\leq C\text{ and } (A+\frac{C}{2})\mu \\& <\lambda_a \leq( A+\frac{B}{2}+\frac{C}{2})\mu.   
   \end{cases}
\]
\end{theorem}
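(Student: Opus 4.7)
The plan is to prove the theorem by induction on the total number of systematic nodes $A+B$, using Lemma \ref{allcoded} as the base case ($A=B=0$) and Lemma \ref{adding} as the inductive step that increments $A$ or $B$ by one. Since the theorem is symmetric in the two files (swapping $a \leftrightarrow b$ reflects the region across the diagonal), it suffices to handle additions of a systematic node for $a$; the mirror argument takes care of additions of a systematic node for $b$.

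For the base case, Lemma \ref{allcoded} gives $L(\lambda_a)=\tfrac{C}{2}\mu-\lambda_a$ on $[0,\tfrac{C}{2}\mu]$. I would check that setting $A=B=0$ in the theorem's formula collapses the first, second, third and sixth pieces to empty or single-point intervals, while the fourth piece yields $L(\lambda_a)=-\lambda_a+\tfrac{C}{2}\mu$ on $(0,\tfrac{C}{2}\mu]$, matching Lemma \ref{allcoded}. The $\lambda_a$-intercept $\min\{(A+C)\mu,(A+\tfrac{B}{2}+\tfrac{C}{2})\mu\}=\tfrac{C}{2}\mu$ also matches.

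For the inductive step, assume the theorem holds for $(A,B,C)$ and add one systematic node for file $a$. I would split into two subcases. If $A<C$, apply Case 1 of Lemma \ref{adding}: on $[0,\mu]$ the new bound equals $-\tfrac{1}{2}\lambda_a+\tfrac{A+C+1}{2}\mu+B\mu$, which I would verify coincides with the theorem's third piece evaluated at parameters $(A+1,B,C)$ (this uses $A+1\leq C$, which keeps the incremented system in the $A\leq C$ regime, or else lands at exactly the boundary $A+1=C$ where the third and second pieces agree). On $[\mu,\lambda_a^*+\mu]$, Lemma \ref{adding} tells us $L'(\lambda_a)=L(\lambda_a-\mu)$, so each piece of the inductive hypothesis shifts rightward by $\mu$; the breakpoints $A\mu,(A+\tfrac{C}{2})\mu,(A-C)\mu$ and the $\lambda_a$-intercept all move by $\mu$, matching the breakpoints of the target formula at $(A+1,B,C)$. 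If instead $A\geq C$, apply Case 2: on $[0,\mu]$ the new bound is the constant $\lambda_b^*$, and the inductive hypothesis supplies $\lambda_b^*=L(0)=(B+C)\mu$ (read off from the first piece of the theorem, which is in force when $A\geq C$); this exactly produces the first-piece plateau of the target formula on $[0,(A+1-C)\mu]$. The right-shift argument on $[\mu,\lambda_a^*+\mu]$ proceeds as before, and the new plateau seamlessly joins the shifted second piece at $\lambda_a=(A+1-C)\mu$.

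The main obstacle will be the bookkeeping: there are six piecewise sub-regimes governed by the inequalities $A\lessgtr C$ and $B\lessgtr C$, and I must verify that as $A$ crosses $C$ during the induction, the switch between Case 1 and Case 2 of Lemma \ref{adding} creates precisely the plateau $L=(B+C)\mu$ of the first piece while preserving continuity with the slope-$(-\tfrac{1}{2})$ piece at $\lambda_a=(A-C)\mu$; symmetrically, the $B=C$ threshold governs whether the slope-$(-2)$ piece terminates at $(A+C)\mu$ (when $B>C$) or at $(A+\tfrac{B}{2}+\tfrac{C}{2})\mu$ (when $B\leq C$), and this is inherited from additions of systematic $b$-nodes via Cases 3 and 4 of Lemma \ref{adding}. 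Once these boundary alignments are checked, every slope and breakpoint in the theorem's piecewise description is forced, and the induction closes.
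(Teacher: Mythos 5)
Your plan is correct and is essentially the paper's own argument: the paper also starts from Lemma \ref{allcoded} with $C$ coded nodes and then applies Lemma \ref{adding} repeatedly, adding $\min(A,C)$ $a$-nodes via Case 1, the remaining $A-C$ via Case 2, then $\min(B,C)$ $b$-nodes via Case 3 and the rest via Case 4, which is precisely the sequential construction your induction on $A+B$ formalizes. The only cosmetic difference is that the paper writes out the accumulated piecewise formula at each stage rather than phrasing it as induction, and it does not invoke the $a\leftrightarrow b$ symmetry shortcut (it simply uses Cases 3 and 4 directly, as you also note at the end).
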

\begin{proof}
Let $A'=A$ if $A\leq C$ and $A'=C$ otherwise. Let $A''=A-C$ if $A>C$ and $0$ otherwise. Similarly, let $B'=B$ if  $B\leq C$ and $B'=C$ otherwise and let $B''=B-C$ if $B>C$ and $B''=0$ otherwise.
Consider building a coded storage system by first adding $C$ coded nodes. Then by Lemma \ref{allcoded}, the service capacity region is bounded by $\lambda_a = 0$, $\lambda_b = 0$, and $L(\lambda_a) = \frac{C}{2} \mu - \lambda_a$ where $0 \leq \lambda_a \leq \frac{C}{2}\mu$. 

We will then add $A'$ systematic nodes for file $a$. Repeatedly applying Lemma \ref{adding} Case 1 yields $L(\lambda_a) =$

\begin{align}
\begin{cases}
-\frac{1}{2}\lambda_a +  \frac{A'+C}{2}\mu, & \text{$0 \leq \lambda_a \leq A'\mu$} \\
(A'+\frac{C}{2}) \mu - \lambda_a, & \text{$A'\mu \leq \lambda_a \leq  A'\mu + \frac{C}{2}\mu.$}
\end{cases}
\end{align}

If $A''\neq 0$, then we may continue to add systematic nodes for file $a$ by applying Lemma \ref{adding} Case 2. This yields $L(\lambda_a) =$
\begin{align}
\begin{cases}
 \frac{A'+C}{2}\mu & \text{if $0 \leq \lambda_a \leq A''\mu$} \\
-\frac{1}{2}\lambda_a +  \frac{A'+A''+C}{2}\mu & \text{if $A'' \mu\leq \lambda_a \leq (A'+A'')\mu$} \\
(A'+A''+\frac{C}{2}) \mu - \lambda_a & \text{if $(A'+A'')\mu \leq \lambda_a$} \\&\leq  (A'+A'')\mu + \frac{C}{2}\mu.
\end{cases}
\end{align}
Thus we have $L(\lambda_a) =$
\begin{align}
\begin{cases}
-\frac{1}{2}\lambda_a +  \frac{A+C}{2}\mu & \text{if $A\leq C$ and  $0 \leq \lambda_a \leq A\mu$} \\
 C\mu & \text{if $A>C$ and $0 \leq \lambda_a \leq (A-C)\mu$} \\
-\frac{1}{2}\lambda_a +  \frac{A+C}{2}\mu & \text{if $A>C$ and  $(A-C) \mu\leq \lambda_a \leq A\mu$} \\
(A+\frac{C}{2}) \mu - \lambda_a & \text{if $A\mu \leq \lambda_a \leq A\mu + \frac{C}{2}\mu.$}
\end{cases}
\end{align}
Applying Lemma \ref{adding} Case 3 to add $B'$ systematic nodes for file $b$ yields $L(\lambda_a) =$
\begin{align}
\begin{cases}
-\frac{1}{2}\lambda_a +  (\frac{A+C}{2}+B')\mu & \text{if $A\leq C$ and } 0 \leq \lambda_a \leq A\mu \\
 (B'+C)\mu & \text{if $A>C$ and}\\& 0 \leq \lambda_a \leq (A-C)\mu \\
-\frac{1}{2}\lambda_a + ( \frac{A+C}{2}+B')\mu & \text{if $A>C$ and}\\&  (A-C) \mu\leq \lambda_a \leq A\mu \\
(A+B'+\frac{C}{2}) \mu - \lambda_a & \text{if $A\mu \leq \lambda_a \leq A\mu + \frac{C}{2}\mu$}\\
-2\lambda_a+(2A+B'+C)\mu & \text{if $ A\mu + \frac{C}{2}\mu<\lambda_a$}\\&\leq  (A+ \frac{B'+C}{2})\mu .
\end{cases}
\end{align}
If $B''\neq 0$, we may continue to add systematic nodes for file $b$ by applying Lemma \ref{adding} Case 4. This raises the boundary of the service capacity region by $B''\mu$ in the $\lambda_b$ direction, giving the desired result.
\end{proof}
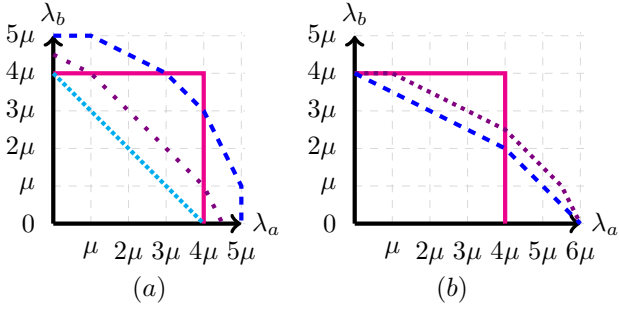
\begin{figure}
\begin{center}
\begin{tikzpicture}[ultra thick,scale=.5]
\draw[help lines, color=gray!30, dashed] (-.01,-.01) grid (5.1,5.1);
\draw[->,ultra thick,black] (0,0)--(5,0) node[right]{$\lambda_a$};
\draw[->,ultra thick,black] (0,0)--(0,5) node[above]{$\lambda_b$};
\draw[ultra thick,magenta] (0,4)--(4,4)--(4,0);
\draw[ultra thick,blue,dashed] (0,5)--(1,5)--(3,4)--(4,3)--(5,1)--(5,0);
\draw[ultra thick,violet,loosely dotted] (0,4.5)--(1,4)--(4,1)--(4.5,0);
\draw[ultra thick,cyan,densely dotted] (0,4)--(4,0);
\node[circle,scale=.5,label=left:$0$] (O) at (0,0) {};
\node[circle,scale=.5,label=below:$\mu$] (x1) at (1,0) {};
\node[circle,scale=.5,label=below:$2\mu$] (x2) at (2,0) {};
\node[circle,scale=.5,label=below:$3\mu$] (x3) at (3,0) {};
\node[circle,scale=.5,label=below:$4\mu$] (x4) at (4,0) {};
\node[circle,scale=.5,label=below:$5\mu$] (x5) at (5,0) {};
\node[circle,scale=.5,label=left:$\mu$] (y1) at (0,1) {};
\node[circle,scale=.5,label=left:$2\mu$] (y2) at (0,2) {};
\node[circle,scale=.5,label=left:$3\mu$] (y3) at (0,3) {};
\node[circle,scale=.5,label=left:$4\mu$] (y4) at (0,4) {};
\node[circle,scale=.5,label=left:$5\mu$] (y5) at (0,5) {};
\node[circle,scale=.5,label=left:$(a)$] (a) at (3.5,-1.75) {};
\end{tikzpicture}
\begin{tikzpicture}[ultra thick,scale=.5]
\draw[help lines, color=gray!30, dashed] (-.01,-.01) grid (6.1,5.1);
\draw[->,ultra thick,black] (0,0)--(6,0) node[right]{$\lambda_a$};
\draw[->,ultra thick,black] (0,0)--(0,5) node[above]{$\lambda_b$};
\draw[ultra thick,magenta] (0,4)--(4,4)--(4,0);
\draw[ultra thick,violet, dotted] (0,4)--(1,4)--(4,2.5)--(5.5,1)--(6,0);
\draw[ultra thick,blue,dashed] (0,4)--(4,2)--(6,0);
\node[circle,scale=.5,label=left:$0$] (O) at (0,0) {};
\node[circle,scale=.5,label=below:$\mu$] (x1) at (1,0) {};
\node[circle,scale=.5,label=below:$2\mu$] (x2) at (2,0) {};
\node[circle,scale=.5,label=below:$3\mu$] (x3) at (3,0) {};
\node[circle,scale=.5,label=below:$4\mu$] (x4) at (4,0) {};
\node[circle,scale=.5,label=below:$5\mu$] (x5) at (5,0) {};
\node[circle,scale=.5,label=below:$6\mu$] (x6) at (6,0) {};
\node[circle,scale=.5,label=left:$\mu$] (y1) at (0,1) {};
\node[circle,scale=.5,label=left:$2\mu$] (y2) at (0,2) {};
\node[circle,scale=.5,label=left:$3\mu$] (y3) at (0,3) {};
\node[circle,scale=.5,label=left:$4\mu$] (y4) at (0,4) {};
\node[circle,scale=.5,label=left:$5\mu$] (y5) at (0,5) {};
\node[circle,scale=.5,label=left:$(b)$] (b) at (3.5,-1.75) {};
\end{tikzpicture}
\end{center}
\caption{Boundaries of rate regions with $N=8$. 
\\(a)  $A=4,B=4,C=0$ (solid), $A=3, B=3, C=2$ (dashed), $A=1, B=1, C=6$ (loosely dotted), $A=0, B=0, C=8$ (densely dotted).
\\(b) $A=4, B=4, C=0$ (solid), $A=4, B=1, C=3$ (dotted), $A=4, B=0, C=4$ (dashed).}
\label{N8}
\end{figure}

\section{Concluding Remarks}
\label{sec:conclu}
Popular content files are generally replicated at multiple nodes in order to support a larger volume of access requests. For large files, it can be slow and expensive to dynamically add replicas to adjust to changes in popularity.

In this paper, we consider a erasure coded system where some nodes store coded combinations of multiple content files. We determine the service capacity region, or the maximum rate of access requests that can be served by this system. Our results indicate that for the same amount of redundancy, adding coded nodes instead of replicas provides more robustness to changes in content popularity. We determine the capacity region for commonly used codes like MDS and Simplex codes. Comparison of the regions sheds light on designing the erasure code to maximize service capacity.

To the best of our knowledge, this is the first work to analyze the service capacity of coded storage systems. There are many questions open for future research. While we have studied the service capacity regions of commonly used codes, developing a general theory to optimally split requests, and design codes that maximize the capacity region remains an open problem. Also, in this paper we consider sending a request to one of the repair groups. Redundantly assigning requests to multiple groups and waiting for any one copy may increase service capacity, as shown in \cite{joshi2017boosting} for task replication in computing.

\section*{Acknowledgments}
Part of this research is based upon work supported by the National Science Foundation under Grants No.~CIF-1717314, and No.~DMS-1439786 while some authors were in residence at the Institute for Computational and Experimental Research in Mathematics in Providence, RI, during the {\it Women in Data Science and Mathematics Research Collaboration Workshop} at ICERM in July 2017.




\bibliographystyle{IEEEtran}
\bibliography{swanand_bib} 

\vfill{}
\end{document}